\setlist{itemsep=0pt}
\newcommand{\affiliation}[2]{%
    \footnote{#2}%
    \newcounter{#1}%
    \setcounter{#1}{\value{footnote}}
}
\newcommand{\reuse}[1]{%
    \footnotemark[\value{#1}]
}
\DeclareMathOperator{\Vol}{Volume}
\DeclareMathOperator{\Sample}{Sample}
\DeclareMathOperator{\Contains}{Contains}
\DeclareMathOperator{\Random}{Random}
\DeclareMathOperator{\Probe}{Probe}
\newcommand{\eps}{\varepsilon}
\newcommand{\Alg}{\mathcal{A}}
\newcommand{\cS}{\mathcal{S}}
\newcommand{\R}{\mathbb{R}}
\newcommand{\N}{\mathbb{N}}
\newcommand{\Z}{\mathbb{Z}}
\newcommand{\E}{\mathbb{E}}
\newtheorem{theorem}{Theorem}
\newtheorem{lemma}{Lemma}
\newtheorem{definition}{Definition}
\newtheorem{observation}{Observation}
\title{Approximating Klee's Measure Problem and\\a Lower Bound for Union Volume Estimation}
\date{}
\author{
Karl Bringmann\affiliation{uds}{Saarland University and Max-Planck-Institute for Informatics, Saarland Informatics Campus, Saarbr\"ucken, Germany. This work is part of the project TIPEA that has received funding from the European Research Council (ERC) under the European Unions Horizon 2020 research and innovation programme (grant agreement No. 850979).}\\\texttt{bringmann@cs.uni-saarland.de} \and
Kasper Green Larsen\affiliation{aarhus}{Aarhus University. Supported by a DFF Sapere Aude Research Leader Grant No. 9064-00068B.}\\\texttt{larsen@cs.au.dk} \and
Andr\'e Nusser\affiliation{cnrs}{Université Côte d'Azur, CNRS, Inria, France. This work was supported by the French government through the France 2030 investment plan managed by the National Research Agency (ANR), as part of the Initiative of Excellence of Université Côte d’Azur under reference number ANR-15-IDEX-01.
Part of this work was conducted while the author was at BARC, University of Copenhagen, supported by the VILLUM Foundation grant 16582.}\\\texttt{andre.nusser@cnrs.fr} \and
Eva Rotenberg\affiliation{dtu}{Technical University of Denmark. Supported by DFF Grant 2020-2023 (9131-00044B) ``Dynamic Network Analysis'', the VILLUM Foundation grant VIL37507 ``Efficient Recomputations for Changeful Problems'' and the Carlsberg Foundation Young Researcher Fellowship CF21-0302 ``Graph Algorithms with Geometric Applications''.}\\\texttt{erot@dtu.dk} \and
Yanheng Wang\reuse{uds}\\\texttt{yanhwang@cs.uni-saarland.de}
}
\begin{document}
\maketitle

\begin{abstract}
Union volume estimation is a classical algorithmic problem. Given a family of objects $O_1,\ldots,O_n \subset \mathbb{R}^d$, we want to approximate the volume of their union. In the special case where all objects are boxes (also called hyperrectangles) this is known as Klee's measure problem.
The state-of-the-art $(1+\varepsilon)$-approximation algorithm [Karp, Luby, Madras '89] for union volume estimation as well as Klee's measure problem in constant dimension $d$ uses a total of $O(n/\varepsilon^2)$ queries of three types: (i) determine the volume of $O_i$; (ii) sample a point uniformly at random from $O_i$; and (iii) ask whether a given point is contained in $O_i$.

First, we show that if an algorithm learns about the objects only through these types of queries, then $\Omega(n/\varepsilon^2)$ queries are necessary. In this sense, the complexity of [Karp, Luby, Madras '89] is optimal.
Our lower bound holds even if the objects are equiponderous axis-aligned polygons in $\mathbb{R}^2$, if the containment query allows arbitrary (not necessarily sampled) points, and if the algorithm can spend arbitrary time and space examining the query responses.

Second, we provide a more efficient approximation algorithm for Klee's measure problem, which improves the running time from $O(n/\varepsilon^2)$ to $O((n+\frac{1}{\varepsilon^2}) \cdot \log^{O(d)}(n))$. We circumvent our lower bound by exploiting the geometry of boxes in various ways:
(1) We sort the boxes into classes of similar shapes after inspecting their corner coordinates.
(2) With orthogonal range searching, we show how to sample points from the union of boxes in each class, and how to merge samples from different classes.
(3) We bound the amount of wasted work by arguing that most pairs of classes have a small intersection.
\end{abstract}

\newpage



\section{Introduction}
We revisit the classical problem of \emph{union volume estimation}:
given objects $O_1,\ldots,O_n \subset \mathbb{R}^d$, we want to estimate the volume of $O_1 \cup \ldots \cup O_n$.\footnote{Technically, the objects need to be measurable. In the most general form, $O_1,\ldots,O_n$ can be any measurable sets in a measure space, and we want to estimate the measure of their union. However, this work only deals with boxes in $\mathbb{R}^d$ (in our algorithm) and polygons in the plane (in our lower bound).}
This problem has several important applications such as DNF counting and network reliability; see the discussion in Section~\ref{sec:related work}.

The state-of-the-art solution~\cite{KarpLM89} works in a model where one has access to each input object $O_i$ by three types of queries: (i) determine the volume of the object, (ii) sample a point uniformly at random from the object, and (iii) ask whether a point is contained in the object.
Apart from these types of queries, the model allows arbitrary computations. The complexity of algorithms is thus measured by the number of queries.

After Karp and Luby~\cite{KarpL85} introduced this model, Karp, Luby and Madras~\cite{KarpLM89} gave an algorithm that $(1+\eps)$-approximates the volume of $n$ objects in this model with constant success probability.\footnote{The success probability can be boosted to $1-\delta$, adding a $\log(1/\delta)$ factor in time and query complexity.}
It uses $O(n/\eps^2)$ queries plus $O(n/\eps^2)$ additional time, which improves earlier algorithms \cite{KarpL85,LubyTechReport}, and only asks containment queries of previously sampled points.
The last 35 years have seen no algorithmic improvement. Is this classical upper bound best possible? We resolve the question in this work by providing a matching lower bound.

The union volume estimation problem was also studied recently in the streaming setting~\cite{MeelV021,Meel0V22}, where a stream of objects $O_1,\ldots,O_n \subset \Omega$ arrive in order. When we are at position $i$ in the stream, we can only query object $O_i$. This line of work yields algorithms with constant success probability that use $O(\textrm{polylog}(|\Omega|)/\eps^2)$ queries and additional time per object (and the same bound also holds for space complexity). Summed over all $n$ objects, the bounds match the classical algorithm up to the $\textup{polylog}(|\Omega|)$ factor. So, interestingly, even in the streaming setting the same upper bound can be achieved.\footnote{See also~\cite{TirthapuraW12} for earlier work studying Klee's measure problem in the streaming setting.}

The perhaps most famous application of the algorithm by Karp, Luby, and Madras~\cite{KarpLM89} is \emph{Klee's measure problem}~\cite{Klee77}.
This is a fundamental problem in computational geometry in which we are given $n$ axis-aligned boxes in $\R^d$ and want to compute the volume of their union.
An axis-aligned box is a set in the form $[a_1,b_1] \times \ldots \times [a_d,b_d] \subset \R^d$, and the input consists of the corner coordinates $a_1,b_1,\ldots,a_d,b_d$ of each box.
A long line of research on this problem and various special cases (e.g., for fixed dimensions or for cubes)~\cite{LeeuwenW81,OvermarsY91,Chan03,AgarwalKS07,Agarwal10,Chan10,YildizS12,Bringmann12} lead to an exact algorithm running in time $O(n^{d/2} + n \log n)$ for constant $d$~\cite{Chan13}.
A conditional lower bound suggests that any faster algorithm would require fast matrix multiplication techniques~\cite{Chan10}, but it is unclear how to apply fast matrix multiplication to this problem. 
On the approximation side, note that the three queries can be implemented in time $O(d)$ for any $d$-dimensional axis-aligned box. Thus the union volume estimation algorithm can be applied, and it computes a $(1+\eps)$-approximation to Klee's measure problem in time $O(nd/\eps^2)$, as has been observed in~\cite{BringmannF10}.
This direct application of union volume estimation was the state of the art for approximate solutions for Klee's measure problem until our work.
See Section~\ref{sec:related work} for interesting applications of Klee's measure problem.


\subsection{Our contribution} \label{sec:our contribution}


\subsubsection*{Lower bound for union volume estimation}
Given the state of the art, a natural question is to ask \emph{whether the query complexity of the general union volume estimation algorithm of \cite{KarpLM89} can be further improved}.
Any such improvement would speed up several important applications, cf.~Section~\ref{sec:related work}.
On the other hand, any lower bound showing that the algorithm of \cite{KarpLM89} is optimal also implies tightness of the known streaming algorithms (up to logarithmic factors), as the streaming algorithms match the static running time bound.


%

Previously, a folklore lower bound of $\Omega(n + 1/\eps^2)$ was known. But between this and the upper bound $O(n/\eps^2)$ there is still a large gap; consider for example the regime $\eps = 1/\sqrt{n}$ where the bounds are $\Omega(n)$ and $O(n^2)$, respectively. We close this gap by strengthening the lower bound to $\Omega(n/\eps^2)$, thereby also showing optimality of \cite{KarpLM89}. Note that the lower bound is about query complexity in the model; it does not make any computational assumption. In particular, it holds even if the algorithm spends arbitrary time and space in computation.

\begin{theorem}
\label{thm:lowerbound}
    Any algorithm that computes a $(1 + \eps)$-approximation to the volume of the union of $n$ objects via volume, sampling and containment queries with success probability at least $2/3$ must make $\Omega(n/\eps^2)$ queries.
\end{theorem}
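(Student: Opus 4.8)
The plan is to prove the lower bound via an information-theoretic / indistinguishability argument against a two-case distribution over inputs.

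\paragraph{Setup.} We construct a family of hard instances on $n$ objects, each of which is an equiponderous axis-aligned polygon in $\R^2$ of volume (say) $1$. We will design the instances so that the true union volume concentrates around one of two well-separated values, $V_{\text{low}}$ and $V_{\text{high}}$ with $V_{\text{high}} \ge (1+3\eps) V_{\text{low}}$, depending on a hidden bit (or hidden parameter). Any algorithm achieving a $(1+\eps)$-approximation with probability $2/3$ must, in particular, distinguish the two cases with probability $2/3$, so it suffices to show that distinguishing requires $\Omega(n/\eps^2)$ queries. The natural way to realize the two cases is to have each object $O_i$ be a union of $\Theta(1)$ many tiny ``cells'' drawn from a large common grid of $M \gg n$ cells, where the overlap pattern between objects is governed by i.i.d.\ biased coins: in the ``low'' case each cell of the grid is covered by a $\mathrm{Ber}(p)$-distributed set of objects with $p = \Theta(1/n)$ calibrated so that pairwise overlaps are substantial; in the ``high'' case $p' = p(1-\Theta(\eps))$ so overlaps are slightly smaller and the union is slightly larger. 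The key point is that a single query to a single object $O_i$ reveals only $O(1)$ bits of information about whether we are in the low or high world, because the per-object distribution differs only by an $\eps$-perturbation in the correlation structure with other objects, and a sampling query returns a uniform cell of $O_i$ whose identity is (almost) equidistributed in both worlds.

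\paragraph{Main steps.} First, I would formalize the two input distributions $\cD_{\text{low}}, \cD_{\text{high}}$ and compute $\E[\Vol(\bigcup O_i)]$ in each, verifying a multiplicative gap of $1+\Theta(\eps)$ and, via a second-moment / Chebyshev bound over the independent cell choices, that the union volume concentrates within an $O(\eps)$ window of its mean with probability $\ge 0.99$ (this needs $M$ large and the cell-to-object incidence structure to have bounded dependencies). Second, I would set up the query-complexity argument: model the interaction as an adaptive decision tree where each node is a query (volume / sample / containment) to some $O_i$, and track, via a hybrid or KL-divergence argument, how much the algorithm's view can shift between the two worlds. The clean tool here is to bound the total variation distance between the algorithm's transcript distributions under $\cD_{\text{low}}$ and $\cD_{\text{high}}$ by something like $\sqrt{\sum_t \E[\text{KL of answer to } t\text{-th query}]}$; since each individual query to an object contributes only $O(\eps^2/n)$ to the KL divergence (the distributions of a single object's answers differ by $O(\eps/n)$ in the relevant parameter, and KL scales quadratically), a total of $q$ queries gives total variation $O(\eps\sqrt{q/n})$, which must be $\Omega(1)$ to distinguish, forcing $q = \Omega(n/\eps^2)$. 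Third, I would handle the subtlety that containment queries may be on \emph{arbitrary} points, not sampled ones: here I would argue that because the grid is enormous ($M \gg \mathrm{poly}(n/\eps)$) and cell membership is near-uniform, an arbitrary point query is essentially uninformative unless it hits a cell the algorithm has already ``discovered'' by sampling, and a union bound shows the probability of a lucky hit is negligible; thus containment queries contribute no more information than sampling queries.

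\paragraph{Main obstacle.} The hardest part will be the per-query information bound in the presence of adaptivity and the correlation structure: unlike the clean ``hidden bias of $n$ independent coins'' template, here the hidden parameter $p$ controls a \emph{joint} distribution over which objects share which cells, so answers to queries on different objects are correlated through the shared cells, and one must argue that an adaptive algorithm cannot amplify this by, e.g., sampling $O_i$, then querying containment of that point in all other $O_j$'s to reconstruct the overlap pattern of a single cell. The resolution is a careful choice of parameters so that each cell is shared by only $O(1)$ objects in expectation and the algorithm would need to ``rediscover'' $\Omega(n)$ cells (each costing $\Omega(1/\eps^2)$ samples to pin down its multiplicity to the needed precision) to estimate the union volume — essentially reducing to $n$ parallel copies of the $\Omega(1/\eps^2)$ coin-bias lower bound, glued together so that queries cannot be shared across copies. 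Making this reduction rigorous — pinning down the right potential/information measure and showing it degrades by at most $O(\eps^2/n)$ per query regardless of adaptivity — is the technical crux; the concentration and gap computations are routine by comparison.
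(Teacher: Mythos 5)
There is a genuine gap here, on two levels. First, the construction as specified is internally inconsistent and leaks the hidden parameter through the easiest query. If each cell of the grid independently includes each object with probability $p$ in one world and $p'=p(1-\Theta(\eps))$ in the other, then object $O_i$ has $\mathrm{Bin}(M,p)$ versus $\mathrm{Bin}(M,p')$ cells; the objects are not equiponderous across the two worlds, and since $M\gg\mathrm{poly}(n/\eps)$ a \emph{single} $\Vol(O_i)$ query (or the marginal distribution of a single $\Sample(O_i)$) already distinguishes the worlds with high probability. Moreover the sign of your gap is reversed: with $p'<p$ both the total mass and the union $M\bigl(1-(1-p')^n\bigr)$ shrink, so the ``high'' world has the \emph{smaller} union. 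These are repairable (fix each object to exactly $K$ cells and hide the bias only in the correlation/collision structure), but any repair destroys the i.i.d.\ per-cell product structure on which your chain-rule KL bound of $O(\eps^2/n)$ per query rests, so the one quantitative claim in the argument no longer has a proof. Second, and more fundamentally, the step you yourself flag as the crux --- that an adaptive algorithm which samples a point from $O_i$ and then containment-queries it against all other $O_j$ cannot extract more than $O(\eps^2/n)$ bits per query --- is exactly the content of the theorem and is only asserted. A plan whose ``main obstacle'' paragraph coincides with the statement to be proved is not yet a proof.

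For comparison, the paper sidesteps the need for any direct information bound on a planted-bias distribution by reducing from Query-Gap-Hamming (whose $\Omega(\ell)$ access lower bound with $\ell=\Theta(1/\eps^2)$ follows from the Gap-Hamming communication bound). The hidden vectors $x,y$ are encoded into $2n$ objects so that the union cardinality equals $\tfrac52 n\ell-\tfrac12 n\langle x,y\rangle$, and two obfuscations ensure that revealing one bit costs $\Omega(n)$ queries: a heavy band $R$ common to all objects makes a $\Sample$ query informative only with probability $\tfrac1{n+1}$, and a secret random permutation $\pi_j$ within each block makes a $\Contains$ query on an arbitrary point succeed only by guessing $\pi_j(i)$ --- and the probability that $m=\lfloor 2^{-9/\delta}n\rfloor$ adaptive probes ever succeed is bounded by an encoding (entropy) argument, not by a per-query KL computation. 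That encoding argument is precisely the tool you are missing for handling adaptivity; if you want to pursue your route, you would need either such a compression argument or a carefully justified conditional-KL chain rule for the repaired (non-product) construction.
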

We highlight that our lower bound holds even for equiponderous, axis-aligned polygons in the plane.

\subsubsection*{Upper bound for Klee's measure problem}
Our lower bound for union volume estimation implies that any improved algorithm for Klee's measure problem must exploit the geometric structure of boxes. To this end, we exploit our knowledge of the corner coordinates and classify the boxes by shapes. In addition, we apply the geometric tool of orthogonal range searching.
They allow us to approximate Klee's measure problem faster than the previous $O(n/\eps^2)$ algorithm. Here and throughout, we assume that the dimension $d$ is a constant; in particular, we suppress factors of $2^{O(d)}$ in the time complexity.

\begin{restatable}{theorem}{thmkmp}
\label{thm:kmp}
There is an algorithm that runs in time $O \left((n + \frac{1}{\eps^2}) \cdot \log^{2d+1} (n)\right)$ and with probability at least $0.9$ computes a $(1+\eps)$-approximation for Klee's measure problem.
\end{restatable}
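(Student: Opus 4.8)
The plan is to replace the $n$ individual boxes in the Karp--Luby--Madras estimator by a much smaller collection of aggregated objects from which one can sample efficiently, and then to run a Karp--Luby-style estimator on these aggregates. Concretely, first inspect all $2dn$ corner coordinates and assign to each box $B$ a \emph{shape class} given by the vector $(\lfloor \log_2(\text{side length of }B\text{ in dimension }k)\rfloor)_{k=1}^d$; boxes in a common class $t$ have side lengths agreeing up to a factor $2$ in every coordinate, and there are at most $n$ nonempty classes $C_1,\dots,C_m$. Write $U_t = \bigcup_{B \in C_t} B$, $U := O_1\cup\dots\cup O_n$, and $W = \sum_t |U_t| \ge |U|$. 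Then $|U|/W$ is the mean of the Bernoulli estimator ``pick $t$ with probability $|U_t|/W$, pick $x$ uniform in $U_t$, output $1$ iff $t = \min\{s : x \in U_s\}$''; averaging $N$ independent copies and multiplying by $W$ yields, by Chebyshev, a $(1+\eps)$-approximation of $|U|$ with constant probability once $N = \Theta\big(W/(|U|\eps^2)\big)$, which is boosted to $0.9$ at a constant-factor cost.

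Second, I build, for each class $t$, a data structure supporting three operations in polylogarithmic time: report a $(1+\eps)$-approximation of $|U_t|$; draw a near-uniform sample from $U_t$; and test membership $x \in U_t$. The key observation is that, after rescaling dimension $k$ by $2^{-e_{t,k}}$, every box of $C_t$ has side lengths in $[1,2)$, hence meets only $3^d = 2^{O(d)}$ cells of the integer unit grid; so the grid has only $O(|C_t|)$ nonempty cells (enumerated via an orthogonal range query on the set of lower corners), $U_t$ is the disjoint union of the pieces $U_t \cap g$ over these cells, and it suffices to describe each piece. Using a range tree on the lower corners augmented with range maxima --- and, where needed, refining a class into exactly-congruent sub-classes, for which each piece $U_t \cap g$ is a union of $\le 2^d$ corner-anchored sub-boxes of constant description complexity --- each piece's volume and a sampler for it are produced in polylog time; a cumulative-weight structure over the nonempty cells then gives a sampler for $U_t$, and membership is a point-stabbing query. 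All structures are built in total time $\tOh(n)$.

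Third, I run the estimator above over the classes: one sample costs one weighted choice of $t$, one call to the class-$t$ sampler, and membership queries to locate $\min\{s : x \in U_s\}$ (again realized by orthogonal range searching over the boxes). What remains is to bound $N$, i.e.\ to bound $W/|U|$. By second-order Bonferroni, $W - |U| \le \sum_{t<s}|U_t \cap U_s|$, and a geometric counting argument over shape vectors shows $\sum_{t<s}|U_t\cap U_s| = \tOh(|U|)$ --- equivalently, a point of $U$ lies in only polylogarithmically many classes on average --- because incomparable shape vectors force their boxes to intersect in a much smaller volume while comparable (nested-scale) chains are short. Hence $N = \tOh(1/\eps^2)$ samples suffice, each handled in $\log^{O(d)} n$ time; combining with the $\tOh(n)$ preprocessing and tracking the polylog exponents through the $d$-dimensional range-tree queries and the $O(\log n)$ scales per dimension yields the claimed $O\big((n + \tfrac{1}{\eps^2})\log^{2d+1} n\big)$ bound.

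The main obstacle is the waste bound of the third step: proving that aggregating by shape class costs only a polylogarithmic overhead, $W = \tOh(|U|)$ (and, for instances with large coordinate ranges, that the effective number of scales is still $O(\log n)$). A secondary burden is that the per-class volume and sampling primitives are only approximate and near-uniform, so the bias and variance of the composed estimator must be tracked carefully; and one must verify that inside a grid cell the clipped boxes still admit the claimed constant-complexity corner-anchored description, which is precisely why the refinement to exactly-congruent sub-classes --- together with a range-searching merge of the resulting per-sub-class samplers into a sampler for $U_t$ --- is needed.
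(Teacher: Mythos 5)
Your high-level plan coincides with the paper's: classify boxes by dyadic shape vectors, prove the waste bound $W=\sum_t\Vol(U_t)=\tOh(V)$ via the similar/dissimilar dichotomy (this is exactly the paper's Lemma~\ref{lem:quotient}, proved with Observations~\ref{obs:similar} and~\ref{obs:dissimilar}; it is not an open obstacle), sample per class using a grid of cells matched to the class's shape together with orthogonal range searching, and combine classes by crediting each point only to the first (or last) class containing it. The exponent bookkeeping also matches.

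The genuine gap is in your per-class primitives. Your estimator needs, for every class $t$, the weight $|U_t|$ to within a $(1\pm O(\eps))$ factor (a constant-factor approximation of the weights biases the composed estimator by that same constant) and an exactly or near-uniform sampler for $U_t$. Your proposed route --- that after refining into exactly-congruent sub-classes each piece $U_t\cap g$ is a union of at most $2^d$ corner-anchored sub-boxes of constant description complexity --- is false: two congruent translates clipped to a cell and anchored at the same corner already union to a staircase (e.g.\ in the plane, $[0,0.5]^2\cup([0,0.8]\times[0,0.2])$ inside $[0,1]^2$), a cell can receive clippings of up to $|C_t|$ boxes, and computing the exact volume of such a union is again an instance of Klee's problem (for grounded boxes, the hypervolume-indicator problem), not a polylog-time query. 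Moreover, merging many overlapping sub-class samplers into a uniform sampler for $U_t$ itself requires knowing $|U_t|$ or a rejection bound you have not established. The paper's resolution is to never compute $\Vol(U_t)$: it draws a Poisson $p$-sample from the union $U$ of the grid cells meeting $U_t$ (whose volume is exactly $|\mathcal G|\cdot 2^{L_1+\dots+L_d}$ since cells are disjoint and congruent), filters by an $\textsc{inClass}$ range query --- the restriction of a $p$-sample to a measurable subset is automatically a $p$-sample of that subset, and $\Vol(U)\le 3^d\Vol(U_t)$ bounds the rejection rate --- then filters again by an $\textsc{appears}$ query to keep only points not in later classes, and outputs the total count divided by $p$. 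This removes both the need for per-class volumes and your bias/variance tracking. Relatedly, your sample budget $N=\Theta(W/(|U|\eps^2))$ depends on the unknown $|U|$; the paper calibrates the density $p$ with a separate $O(n\,\mathrm{polylog}\,n)$-time constant-factor estimate of $V$ (Lemma~\ref{lem:crude}), a step your proposal omits.
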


This is a strict improvement when $\eps \leq 1/\log^{d+1}(n)$ is small. Consider for example the regime $\eps = 1/\sqrt{n}$: Our algorithm runs in near-linear time, whereas the previous algorithm runs in quadratic time. We remark that the success probability can be boosted to any $1-\delta$ by standard techniques which incurs an extra $\log(1/\delta)$ factor in the running time. Finally, we highlight that the core of our algorithm is an efficient method to sample uniformly and independently with a given density from the union of the input boxes. This can be of independent interest outside the context of volume estimation.

\subsection{Related work} \label{sec:related work}

A major application of union volume estimation is \emph{DNF counting}, in which we are given a Boolean formula in disjunctive normal form and want to count the number of satisfying assignments.
Computing the exact number of satisfying assignments is \#P-complete, so it likely requires superpolynomial time.
Approximating the number of satisfying assignments can be achieved by a direct application of union volume estimation, as described in \cite{KarpLM89}.
Their algorithm remains the state of the art for approximate DNF counting, see e.g. \cite{MeelSV19}.
This has been extended to more general model counting~\cite{PavanVBM21,ChakrabortyMV13,MeelSV19}, probabilistic databases~\cite{KimelfeldKS08,DalviS07,ReDS07}, and probabilistic queries on databases~\cite{CarmeliZBCKS22}.

We also mention \emph{network reliability} as another application for union volume estimation, which was already discussed in \cite{KarpLM89}. Additionally, Karger's famous paper on the problem~\cite{Karger99} uses the algorithm of \cite{KarpLM89} as a subroutine. However, the current state-of-the-art algorithms no longer use union volume estimation as a tool~\cite{CenHLP24}.

Finally, we want to draw a connection to the following well-known query sampling bound. Canetti, Even, and Goldreich~\cite{CanettiEG95} showed that approximating the mean of a random variable whose codomain is the unit interval requires $\Omega(1/\eps^2)$ queries, thus obtaining tight bounds for the sampling complexity of the mean estimation problem. Their bound generalize to $\Omega(1/(\mu \eps^2))$ on the number of queries needed to estimate the mean $\mu$ of a random variable in general. 
Before our work it was thus natural to expect that the $1/\eps^2$ dependence in the number of queries for union volume estimation is optimal. However, whether the factor $n$ is necessary, or the number of queries could be improved to, say, $O(n + 1/\eps^2)$, was open to the best of our knowledge.

Klee's measure problem is an important problem in computational geometry. One reason for its significance is that techniques developed for it can often be adapted to solve various related problems, such as the depth problem (given a set of boxes, what is the largest number of boxes that can be stabbed by a single point?)~\cite{Chan13} or Hausdorff distance under translation in $L_\infty$~\cite{Chan23}. Moreover, various other problems can be reduced to Klee's measure problem or to its related problems. For example, deciding whether a set of boxes covers its boundary box can be reduced to Klee's measure problem~\cite{Chan13}. The continuous $k$-center problem on graphs (i.e., finding centers that can lie on the edges of a graph that cover the vertices of a graph) is reducible to Klee's measure problem as well~\cite{ShiB08}. Finding the smallest hypercube containing at least $k$ points among $n$ given points can be reduced to the depth problem~\cite{EppsteinE94, Chan99, Chan13}. In light of this, it would be interesting to see if our approximation techniques generalize to any of these related problems.

\section{Technical overview} \label{sec:tech_overview}

We now sketch our upper and lower bounds at an intuitive level. The formal arguments will be presented in Section~\ref{sec:upperbound} and Section~\ref{sec:lowerbound}, respectively. In this paper we denote $[n] := \{1,2\dots,n\}$.

\subsection{Upper bound for Klee's measure problem}
Due to our lower bound, we have to exploit the structure of the input boxes to obtain a running time of the form $O \left( (n + \frac{1}{\eps^2}) \cdot \mathrm{polylog}(n) \right)$. 
We follow the common algorithmic approach of sampling. Specifically, we aim to sample a set $S$ from the union of boxes such that each point is selected with probability density $p$. For appropriately defined $p$, the value $|S|/p$ is a good estimate of the volume of the union. In the overview we assume that a proper $p$ is given and focus on the main difficulty: creating a sample set $S$ for the given $p$.

We start by sorting the input boxes into \emph{classes} by shape. Two boxes are in the same class if, in each dimension $k \in [d]$, their side lengths are both in $[2^{L_k}, 2^{L_k+1})$ for some $L_k \in \mathbb{Z}$.
We call two classes \emph{similar} if the corresponding side lengths are polynomially related (e.g., within a factor of $n^4$) in each dimension. We make two crucial observations:
\begin{enumerate}
    \item Each class is similar to only polylogarithmically many classes (Observation~\ref{obs:similar}).
    \item Dissimilar classes have a small intersection compared to their union (Observation~\ref{obs:dissimilar}, Figure~\ref{fig:dissimilar}).
\end{enumerate}

\begin{figure}[hb]
    \centering
\begin{tikzpicture}[thin,>=stealth, xscale=1, yscale=1]

    \draw[fill=blue!20!white,draw=blue!40!white] 
    (1,1) -- (1,1.3) -- (5,1.3) -- (5,1) -- (1,1);
    \draw[fill=blue!20!white,draw=blue!40!white] 
    (2,0) -- (2.7,0) -- (2.7,3) -- (2,3) -- (2,0);
    \draw[fill=none,draw=blue!40!white] 
    (1,1) -- (1,1.3) -- (5,1.3) -- (5,1) -- (1,1);  

    \draw[fill=blue!15!white,draw=blue!40!white] 
    (7.9,.2) -- (11.8,.2) -- (11.8,3.2) -- (7.9,3.2) -- (7.9,.2);
    \draw[fill=blue!20!white,draw=blue!40!white] 
    (8.7,2) -- (8.7,2.2) -- (12,2.2) -- (12,2) -- (8.7,2);
    \draw[fill=none,draw=blue!40!white] 
    (8.7,2) -- (11.8,2) -- (11.8,2.2) -- (8.7,2.2) -- (8.7,2);
\end{tikzpicture}
\caption{When the side lengths of two boxes differ a lot in at least one of their dimensions (in our examples, the $y$-axis), their intersection is small compared to their union.}
\label{fig:dissimilar}
\end{figure}

Our algorithm continues as follows. We scan through the classes in an arbitrary order. For each class, we sample with density $p$ from the union of the boxes in this class, but we only keep a point if it is not contained in any class that comes later in the order. To efficiently test containment in a later class, we use an orthogonal range searching data structure (with an additional dimension for the index of the class). When the scan finishes, we obtain a desired sample set $S$.

Let us elaborate why the algorithm is efficient.
\begin{description}
    \item[Sampling from a single class.]
    Our approach is simple yet powerful: (i) grid the space into cells of side lengths comparable to the boxes in this class; (ii) sample points from the relevant cells uniformly at random; (iii) discard points outside the union by querying an orthogonal range searching data structure. See Figure~\ref{fig:grid}. All these steps exploit the geometry of boxes. Since the grid cells and boxes have roughly the same shape, a significant fraction of the points sampled in (ii) are contained in the union, i.e., not discarded in (iii). The orthogonal range searching data structure allows us to quickly decide which points to discard. These explain the efficiency of sampling from a class.

    \item[Bounding the amount of wasted work.]
    Sampled points that appear in later classes also need to be discarded, and this is a potential (and only) source of inefficiency. So we need to bound the number of them. Such a point shows up later either (i) in a similar class, or (ii) in a dissimilar class. Our first observation stated that a class is similar to at most polylogarithmically many classes. So (i) can happen up to polylogarithmically many times from the perspective of a fixed point in the union. On the other hand, our second observation stated that the intersection of dissimilar classes is small. So on average (ii) rarely happens and does not affect the expected running time significantly.
\end{description}

\begin{figure}[hb]
    \centering
\begin{tikzpicture}[thin,>=stealth,
	vertex/.style={minimum size=0cm,circle,inner sep=2pt,fill=green!80!black,draw=none},
    cross/.style={cross out, draw, red,
         minimum size=2*(#1-\pgflinewidth), 
         inner sep=0pt, outer sep=0pt},
	every node/.style={vertex},
	scale = 1,xscale=1.35
	]

    \draw[fill=black!10!white,draw=none] 
    (0,0) -- (0,3) -- (3,3) -- (3,0) -- (0,0);
    \draw[fill=black!10!white,draw=none] 
    (4,1) -- (7,1) -- (7,4) -- (4,4) -- (4,1);

    \draw[step=1cm,gray,very thin] (-0,-0) grid (7,4);
 
    \draw[fill=blue!20!white,draw=blue!40!white] 
    (0.9,0.8) -- (2.5,0.8) -- (2.5,2.1) -- (0.9,2.1) -- (0.9,0.8);
    \draw[fill=blue!20!white,draw=blue!40!white] 
    (1.1,.5) -- (2.3,.5) -- (2.3,2.4) -- (1.1,2.4) -- (1.1,.5);
    \draw[fill=none,draw=blue!40!white] 
    (0.9,0.8) -- (2.5,0.8) -- (2.5,2.1) -- (0.9,2.1) -- (0.9,0.8);
    \draw[fill=blue!20!white,draw=blue!40!white] 
    (4.6,1.9) -- (6.1,1.9) -- (6.1,3.2) -- (4.6,3.2) -- (4.6,1.9);
 
    \draw[step=1cm,gray,very thin] (-0,-0) grid (6,4);

    \node[] at (1.25,1.55) {};
    \node[] at (1.55,1.29) {};
    \node[] at (1.5,2.25) {};
    \node[] at (5.22,2.75) {};
    \draw (.5,1.15) node[cross=3pt] {};
    \draw (2.77,.68) node[cross=3pt] {};
    \draw (4.3,2.45) node[cross=3pt] {};
    \draw (5.7,3.35) node[cross=3pt] {};
    \draw (6.2,1.35) node[cross=3pt] {};
    \draw (6.55,1.65) node[cross=3pt] {};
\end{tikzpicture}
\caption{We sample points in the \colorbox{gray!10!white}{grid cells} $\mathcal{G}$ that are intersected by a \colorbox{blue!10!white}{box}  $\mathcal{O}_i$ from a fixed class. We then use orthogonal range searching to determine whether a sampled point is in a box from the class and should be kept (\textcolor{green}{$\bullet$}), or is not and should be discarded (\textcolor{red}{$\times$}).}
\label{fig:grid}
\end{figure}

\subsection{Lower bound for union volume estimation}
Our lower bound is shown by a reduction from the Query-Gap-Hamming problem:
Given two vectors $x,y \in \{-1,+1\}^\ell$, distinguish whether their inner product is greater than $\sqrt{\ell}$ or less than $-\sqrt{\ell}$.
It is known that any algorithm distinguishing these two cases  with success probability at least $2/3$ must access $\Omega(\ell)$ bits of $x$ and $y$.

We first sketch why $\Omega(1/\varepsilon^2)$ queries are necessary to $(1+\eps)$-approximate the volume of the union of two objects in the query model.
Given a Query-Gap-Hamming instance $x,y$, we construct two objects $X=\{(j,x_j) : j \in [\ell]\}$ and $Y=\{(j,y_j) : j \in [\ell]\}$. These are discrete point sets, but they can be inflated to polygons such that cardinality corresponds to volume. See Figure~\ref{fig:simplecase} for an example.
Note that for all $k \in \{0, \dots, \ell\}$, we have
\[ |X \cup Y| = \ell + k \iff \langle x,y \rangle = \ell - 2k .\]
If an algorithm $\Alg$ can $(1+\eps)$-approximate $|X \cup Y|$, then it can approximate $k$ within additive error $2 \eps \ell$, thus it can also approximate $\langle x,y \rangle$ within additive error $4\eps\ell$. Setting $\ell = 1/(16\eps^{2})$, the additive error is at most $4\eps\ell = 4 \cdot \frac{1}{4\sqrt{\ell}} \cdot \ell = \sqrt{\ell}$, which suffices to distinguish $\langle x,y \rangle = \sqrt{\ell}$ from $\langle x,y \rangle = -\sqrt{\ell}$ and thereby solves the Query-Gap-Hamming instance.

Note that $|X| = |Y| = \ell$, so a volume query does not disclose any information about $x$ and $y$. Each sample or containment query accesses at most one bit of $x$ or $y$. Therefore, algorithm $\Alg$ has to make $\Omega(\ell) = \Omega(1/\eps^2)$ queries to $X,Y$.

\begin{figure}[htbp]
    \centering
\begin{tikzpicture}[thin,>=stealth,
	vertex/.style={minimum size=0cm},
	every node/.style={vertex},
	scale = 0.5,xscale=.5
	]
	
    \draw[fill=blue!40!white,draw=gray] 
    (0,1) -- (4,1) -- (4,2) -- (0,2) -- (0,1);
    \draw[fill=blue!40!white,draw=gray] 
    (8,1) -- (12,1) -- (12,2) -- (8,2) -- (8,1);
    \draw[fill=blue!40!white,draw=gray] 
    (12,1) -- (16,1) -- (16,2) -- (12,2) -- (12,1);
    \draw[fill=blue!40!white,draw=gray] 
    (4,0) -- (8,0) -- (8,-1) -- (4,-1) -- (4,0);
    \draw[fill=blue!40!white,draw=gray] 
    (16,0) -- (20,0) -- (20,-1) -- (16,-1) -- (16,0);
    \draw[fill=blue!40!white,draw=gray] 
    (20,0) -- (24,0) -- (24,-1) -- (20,-1) -- (20,0);
 
	\node (neg) at (-1,0) {};
	\node[label={[yshift=-7]below:$1$}](0) at (0,0) {$\mid$};
	\node[label={[yshift=-7]below:$2$}](n) at (4,0) {$\mid$};
	\node[label={[yshift=-7]below:$\cdots$}](npn) at (8,0) {$\mid$};
	\node[label={[yshift=-7]below:$j$}](in) at (12,0) {$\mid$};
	\node[label={[yshift=-7]below:$\cdots$}](jn) at (16,0) {$\mid$};
	\node[label={[yshift=-7]below:$\ell$}](Tn) at (20,0) {$\mid$};
	\node[label={[yshift=-7]below:$\ell+1$}](Tp1n) at (24,0) {$\mid$};

	\draw[->] (neg) -> (25,0);
\end{tikzpicture}
    \caption{The vector $x=\left(+1,-1,+1,+1,-1,-1 \right)$ represented as the set $\{ (j,x_j) : j \in [6] \}$, where each point is drawn as a rectangle.\vspace{.5em}}
    \label{fig:simplecase}
\end{figure}

In order to generalize this lower bound for the union of two objects to an $\Omega(n/\varepsilon^2)$ lower bound for the union of $n$ objects, we need to ensure that each query gives away only $O(1/n)$ bits of information about $x$ and $y$.
We apply two obfuscations that jointly slow down the exposure of bits; see Figure~\ref{fig:polygons}.
Firstly, we introduce objects $X_1,\ldots ,X_n$ whose union is $X$ and objects $Y_1,\ldots ,Y_n$ whose union is $Y$.
Imagine cutting each $X$-induced rectangle in Figure~\ref{fig:simplecase} into $n$ side-by-side pieces and distributing them randomly among $X_1,\ldots , X_n$. Do the same for $Y$.
The idea is that one needs to make $\Omega(n)$ containment queries on the rectangular region in order to hit the correct piece; only then is the corresponding bit revealed.
Secondly, we introduce a large band shared by all $X_i$ and $Y_i$ for $i \in [n]$. In Figure~\ref{fig:polygons}, this is the long dark-blue rectangle that spans from left to right. Intuitively it enforces $\Omega(n)$ sample queries to obtain a single point that contains any information about $x$ and $y$.

\begin{figure}[htbp]
    \centering
\begin{tikzpicture}[thin,>=stealth,
	vertex/.style={minimum size=0cm},
	every node/.style={vertex},
	scale = 0.5
	]
	
    \draw[fill=blue!10!white,draw=none] 
    (0,0) -- (4,0) -- (4,2) -- (0,2) -- (0,0);
    \draw[fill=blue!10!white,draw=none] 
    (8,0) -- (16,0) -- (16,2) -- (8,2) -- (8,0);
    \draw[fill=blue!10!white,draw=none] 
    (4,0) -- (8,0) -- (8,-1) -- (4,-1) -- (4,0);
    \draw[fill=blue!10!white,draw=none] 
    (16,0) -- (24,0) -- (24,-1) -- (16,-1) -- (16,0);

    \draw[fill=blue!40!white,draw=blue!40!white] (2,1) -- (3,1) -- (3,2) -- (2,2) -- (2,1);
    \draw[fill=blue!40!white,draw=blue!40!white] (8,1) -- (9,1) -- (9,2) -- (8,2) -- (8,1);
    \draw[fill=blue!40!white,draw=blue!40!white] (0,0) -- (24,0) -- (24,1) -- (0,1) -- (0,0);
    \draw[fill=blue!40!white,draw=blue!40!white] (5,-1) -- (6,-1) -- (6,0) -- (5,0) -- (5,-1);
	\draw[fill=blue!40!white,draw=blue!40!white] (15,1) -- (16,1) -- (16,2) -- (15,2) -- (15,1);
	\draw[fill=blue!40!white,draw=blue!40!white] (18,-1) -- (19,-1) -- (19,0) -- (18,0) -- (18,-1);
	\draw[fill=blue!40!white,draw=blue!40!white] (20,-1) -- (21,-1) -- (21,0) -- (20,0) -- (20,-1);

	\node (neg) at (-1,0) {};
	\node[label={[yshift=-5]below:$n$}](0) at (0,0) {$\mid$};
	\node[label={[yshift=-5]below:$2n$}](n) at (4,0) {$\mid$};
	\node[label={[yshift=-5]below:$\cdots$}] (npn) at (8,0) {$\mid$};
	\node[label={[yshift=-5]below:$jn$}] (in) at (12,0) {$\mid$};
	\node[label={[yshift=-5]below:$\cdots$}] (jn) at (16,0) {$\mid$};
	\node[label={[yshift=-5]below:$\ell n$}](Tn) at (20,0) {$\mid$};
	\node[label={[yshift=-5]below:$(\ell+1) n$}](Tp1n) at (24,0) {$\mid$};

	\draw[->] (neg) -> (25,0);

\end{tikzpicture}
    \caption{The vector $y$ or $x=\left(+1,-1,+1,+1,-1,-1 \right)$ gives rise to $n$ polygons; one of these polygons is illustrated in dark blue. The light blue area indicates the union of all these $n$ polygons.}
    \label{fig:polygons}
\end{figure}

\section{Approximation algorithm for Klee's measure problem} \label{sec:upperbound}

In this section, we present our approximation algorithm for Klee's measure problem in constant dimension $d$:
\thmkmp*

\subsection{Preliminaries}\label{sec:prelimsklee}

In Klee's measure problem we are given \emph{boxes} $O_1, \dots, O_n \subset \R^d$. Here, a box is an object of the form $O_i = [a_1,b_1] \times \ldots \times [a_d,b_d]$, and as input we are given the coordinates $a_1,b_1,\ldots,a_d,b_d$ of each box. Based on these, it is easy to compute the side lengths and volume of each box.

Throughout we write $V := \Vol(\bigcup_{i=1}^n O_i)$ for the volume of the union of boxes. The goal is to approximate $V$ up to a factor of $1+\eps$.
Our approach is based on sampling, so now let us introduce the relevant notions.

Recall the Poisson distribution $\mathrm{Pois}(\lambda)$ with mean and variance $\lambda$: It captures the number of active points in a universe, under the assumption that active points occur uniformly and independently at random across the universe, and that $\lambda$ points are active on average. The following definition is usually referred to as a homogeneous \emph{Poisson point process} at rate~$p$. Intuitively, we activate each point in some universe $U \subset \R^d$ independently with ``probability density'' $p$, thus the number of activated points follows the Poisson distribution with mean $p \cdot \Vol(U)$.

\begin{definition}[$p$-sample]
    Let $U \subset \R^d$ be a measurable set, and let $p \in [0,1]$. We say that a random subset $S \subseteq U$ is a \emph{$p$-sample of $U$} if for any measurable $U' \subseteq U$ we have that
    $|S \cap U'| \sim \mathrm{Pois}(p \cdot \Vol(U'))$.
\end{definition}
\noindent
In particular, if $S$ is a $p$-sample of $U$, then $|S| \sim \mathrm{Pois}(p \cdot \Vol(U))$. Two more useful properties follow from the definition:
\begin{enumerate}[label=(\roman*)]
    \item For any measurable subset $U' \subseteq U$, the restriction $S \cap U'$ is a $p$-sample of $U'$.
    \item The union of $p$-samples of two disjoint sets $U,U'$ is a $p$-sample of $U \cup U'$. 
\end{enumerate}

Besides sampling, we will apply orthogonal range searching to handle the so-called $\textsc{appears}(x,i)$ query: Given $x \in \R^d$ and $i \in \N$, is $x \in O_i \cup \cdots \cup O_n$?

\begin{lemma}
    \label{lem:appears}
    We can build a data structure in $O(n \log^{d+1}(n))$ time that answers $\textsc{appears}(x,i)$ queries in $O(\log^{d+1}(n))$ time.
\end{lemma}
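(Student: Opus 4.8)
The plan is to answer $\textsc{appears}(x,i)$ by building a single $(d+1)$-dimensional orthogonal range searching structure on the boxes, where the extra coordinate records the box index. Concretely, I would map each input box $O_i = [a_1^{(i)}, b_1^{(i)}] \times \cdots \times [a_d^{(i)}, b_d^{(i)}]$ to a point in $\R^{2d+1}$ that encodes its $2d$ corner coordinates together with the index $i$. A query point $x = (x_1, \dots, x_d)$ together with a threshold index $i$ should be answered by checking whether there exists some $j \ge i$ with $a_k^{(j)} \le x_k \le b_k^{(j)}$ for all $k \in [d]$, i.e.\ whether the axis-parallel box $[x_1,\infty) \times (-\infty, x_1] \times \cdots \times [x_d,\infty) \times (-\infty, x_d] \times [i, \infty)$ in $\R^{2d+1}$ contains at least one of the $n$ stored points. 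This is an \emph{emptiness} query for an orthogonal range, which classical range trees answer in $O(\log^{d'}(n))$ time with $O(n \log^{d'-1}(n))$ preprocessing on $n$ points in $\R^{d'}$; here $d' = 2d+1$, but since we only need emptiness we can fractional-cascade the innermost level and land at $O(\log^{d+1}(n))$ query time, and similarly the construction cost is $O(n \log^{d+1}(n))$. (More simply: a range tree on $d$ ``coordinate'' levels plus one ``index'' level, each contributing a logarithmic factor, with the leaves storing whether the corresponding subtree is non-empty.)

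The key steps, in order, are: (1) reduce box-stabbing to point-in-range by the standard trick of splitting each interval constraint $a_k \le x_k \le b_k$ into the two half-space constraints $a_k \le x_k$ and $x_k \le b_k$ on the two endpoint coordinates, so that ``$x \in O_j$'' becomes ``the point representing $O_j$ lies in a fixed $2d$-dimensional orthogonal box determined by $x$''; (2) append the index coordinate and turn the constraint $j \ge i$ into the one-sided range $[i, \infty)$, giving a $(2d+1)$-dimensional orthogonal range; (3) observe that $\textsc{appears}(x,i)$ is exactly a non-emptiness test for this range over the $n$ stored points; (4) instantiate a $(2d+1)$-level range tree, but exploit that we need only a Boolean answer, not reporting or counting, so the final level can be compressed and augmented with fractional cascading, yielding query time $O(\log^{d+1}(n))$ and build time $O(n \log^{d+1}(n))$ for constant $d$.

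The main obstacle — really the only nontrivial point — is getting the logarithmic exponent down to $d+1$ rather than the naive $2d+1$ that a black-box range tree would give. The naive construction treats the $2d$ endpoint coordinates as independent dimensions, but they come in correlated pairs (the lower and upper endpoint in each axis), and more importantly an emptiness query does not require the full generality of a range tree. I would address this either by (a) noting that a $d$-dimensional interval-stabbing structure (e.g.\ nested segment/interval trees, one per axis) composed with one extra level for the index already gives $O(\log^{d+1} n)$, since stabbing in one dimension is a single logarithmic factor, or (b) citing the standard fact that orthogonal range \emph{emptiness} in $\R^{d'}$ can be solved in $O(\log^{d'-1} n)$ time after $O(n\log^{d'-1} n)$ preprocessing via fractional cascading, and being slightly careful that the endpoint-pair structure lets us save the extra factors so the effective dimension is $d+1$. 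Everything else (correctness of the reduction, the Poisson/volume bookkeeping elsewhere) is routine; I would state the range-tree facts as known and spend the write-up making the reduction and the dimension count precise.
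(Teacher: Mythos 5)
Your option (a) is exactly the paper's construction: lift each box $O_j$ to $O_j\times(-\infty,j]\subset\R^{d+1}$ and build a multi-level segment tree, so that $\textsc{appears}(x,i)$ becomes a point-enclosure (stabbing) query for the point $(x,i)$, answered in $O(\log^{d+1}(n))$ time after $O(n\log^{d+1}(n))$ preprocessing. The detour through points in $\R^{2d+1}$ and range-tree emptiness is unnecessary --- and, as you yourself note, would not give the right exponent on its own --- but since you land on the per-axis interval-stabbing formulation with one extra level for the index, the proposal is correct and essentially matches the paper.
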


\begin{proof}
    For each $j \in [n]$, map the box $O_j \subset \R^d$ to a higher-dimensional box
    \[ O_j^+ := O_j \times (-\infty,j] \subset \R^{d+1}. \]
    We then apply orthogonal range searching. Specifically, we build a multi-level segment tree over $\{O_1^+, \dots, O_n^+ \}$, which takes $O(n \log^{d+1}(n))$ time; see \cite[Section 10.4]{DBLP:books/lib/BergCKO08}. To answer the request $\textsc{appears}(x,i)$ where $x \in \R^d$ and $i \in \N$, we query the segment tree whether there exists a box $O_j^+$ that contains the point $(x,i)$; or phrased differently, whether $x \in O_j$ for some $j \geq i$. The query takes only $O(\log^{d+1}(n))$ time.
\end{proof}

For our algorithm to work, we also need a constant-factor approximation of the volume $V$. 
It is known that this can be computed in $O(n)$ time~\cite{KarpLM89}. In order to stay simple and self-contained, we state a weaker result by implementing the Karp-Luby algorithm \cite{KarpL85} with the help of \textsc{appears} queries.

\begin{lemma}[adapted from \cite{KarpL85}]
	\label{lem:crude}
    Given the data structure from Lemma~\ref{lem:appears}, there exists an algorithm that computes in $O(n \log^{d+1}(n))$ time a 2-approximation to $V$ with probability at least $0.9$.
\end{lemma}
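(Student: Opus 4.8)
The plan is to follow the classical Karp--Luby estimator, but to replace the "for a uniformly random object, compute the number of earlier objects also containing the sampled point" step with calls to the \textsc{appears} data structure. Concretely, I would define $\Sigma := \sum_{i=1}^n \Vol(O_i)$, which is computable in $O(n)$ time from the corner coordinates. The estimator draws a random index $I \in [n]$ with probability proportional to $\Vol(O_I)$, samples a uniform point $x \in O_I$, and then asks whether $I$ is the \emph{smallest} index $j$ with $x \in O_j$; equivalently, whether $x \notin O_1 \cup \cdots \cup O_{I-1}$. Each such trial is a Bernoulli variable whose success probability is exactly $V/\Sigma$, because every point of the union is "charged" to exactly one object (the one of smallest index containing it), and the probability that a random weighted index/point pair lands on that charge is $\Vol(\{x : x\text{ charged to }O_i\})/\Sigma$ summed over $i$, which telescopes to $V/\Sigma$. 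Since $V \le \Sigma \le nV$, the success probability $\mu = V/\Sigma$ lies in $[1/n, 1]$.

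Next I would bound the number of trials. Run $T := \Theta(n)$ independent trials, let $Z$ be the number of successes, and output $\widehat V := (\Sigma/T)\cdot Z$. Then $\E[\widehat V] = V$ and $\E[Z] = T\mu = \Theta(n)\cdot \mu \ge \Theta(1)$, in fact by choosing the constant large enough $\E[Z] \ge C$ for any desired constant $C$. A Chernoff bound on $Z$ (a sum of independent Bernoulli variables with mean $\ge C$) gives that $Z \in [\tfrac12 \E[Z], 2\E[Z]]$, hence $\widehat V \in [\tfrac12 V, 2V]$, with probability at least $0.9$ once $C$ is a sufficiently large absolute constant. This yields the claimed $2$-approximation with probability $0.9$. (If one wants the cleaner multiplicative-error statement one can instead invoke the standard Karp--Luby analysis verbatim; the point is only that $\mu \ge 1/n$, so $O(n)$ trials suffice.)

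For the running time: computing $\Sigma$ and sampling the weighted index (after an $O(n)$ preprocessing of prefix sums) costs $O(1)$ per trial; sampling a uniform point in a box costs $O(d)=O(1)$; and the containment-against-earlier-boxes test is one $\textsc{appears}$-style query. There is a small subtlety here: Lemma~\ref{lem:appears} answers "$x \in O_i \cup \cdots \cup O_n$", i.e.\ membership in a \emph{suffix}, whereas we need membership in a \emph{prefix} $O_1 \cup \cdots \cup O_{I-1}$. This is handled by building the data structure of Lemma~\ref{lem:appears} on the reversed order of the boxes (map $O_j$ to $O_j \times [j,\infty)$ instead, or equivalently relabel $j \mapsto n+1-j$); then "$x$ not in the prefix before $I$" becomes a single suffix query in the reversed indexing. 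So each trial costs $O(\log^{d+1} n)$, the data structure is built once in $O(n\log^{d+1} n)$ time (given, or re-derived by reversing), and the total is $O(n) + O(n)\cdot O(\log^{d+1} n) = O(n\log^{d+1} n)$.

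The only genuine obstacle is the prefix-vs-suffix mismatch just described, and it is resolved by the trivial reindexing trick; everything else is a direct transcription of Karp--Luby's argument, with the $\mu \ge 1/n$ bound supplying the trial count. I would present the Bernoulli-success computation and the Chernoff concentration as the two named steps, and relegate the data-structure bookkeeping to a sentence.
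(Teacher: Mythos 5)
Your proposal is correct and essentially coincides with the paper's proof: the same Karp--Luby estimator with weighted index sampling via prefix sums, $O(n)$ trials justified by $V/\Sigma \geq 1/n$, a single \textsc{appears}-type query per trial, and a standard concentration bound (the paper uses Chebyshev where you use Chernoff; both work). The prefix-vs-suffix issue you flag is real but the paper sidesteps it even more simply by charging each point of the union to the \emph{largest} index containing it, so that the test is literally $\textsc{appears}(x, i+1)$ with no reindexing needed.
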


\begin{algorithm}[h]
\caption{Crude approximation of Klee's measure}
\label{alg:crude}
\begin{enumerate}
	\item Compute prefix sums $S_j := \sum_{i=1}^j \Vol(O_i)$ for all $j \in \{0, \dots, n\}$.
	\item Initialize a counter $N := 0$.
	\item Repeat $40n$ times:
	\begin{itemize}
		\item Sample $u \in [0,1]$ uniformly at random. Binary search for the smallest $i$ such that $u \leq \frac{S_i}{S_n}$.
		\item Sample $x \in O_i$ uniformly at random.
		\item Increment $N$ if not $\textsc{appears}(x, i+1)$.
	\end{itemize}
	\item Output $\tilde{V} := \frac{N}{40n} \cdot S_n$.
\end{enumerate}
\end{algorithm}

\begin{proof}
    We claim that Algorithm \ref{alg:crude} has the desired properties.
    The time bound is easy to see: The computation of the prefix sums takes $O(n)$ time. In each iteration, the binary search costs $O(\log n)$ time, sampling of $x$ costs $O(1)$ time, and querying \textsc{appears} takes $O(\log^{d+1}(n))$ time. So in total we spend $O(n \log^{d+1}(n))$ time.
    
	For the correctness argument, we define two sets
	\begin{align*}
		P &:= \{(i,x) \;:\; i \in [n],\, x \in O_i\}, \\
		Q &:= \{(i,x) \;:\; i \in [n],\, x \in O_i \setminus (O_{i+1} \cup \ldots \cup O_n)\}.
	\end{align*}
 
    Consider an iteration in step 3. For any fixed value $j \in [n]$, we have
    \[ \Pr(i = j) = \Pr \left( \frac{S_{j-1}}{S_n} < u \leq \frac{S_j}{S_n} \right)
    = \frac{S_j - S_{j-1}}{S_n} = \frac{\Vol(O_j)}{S_n}. \]
    With this we can calculate the probability that the counter $N$ increments in this iteration:
    \begin{align*}
        \Pr((i,x) \in Q)
        &= \sum_{j=1}^n \Pr((i,x) \in Q \mid i=j) \cdot \Pr(i=j) \\
        &= \sum_{j=1}^n \frac{\Vol(O_j \setminus (O_{j+1} \cup \ldots \cup O_n))}{\Vol(O_j)} \cdot \frac{\Vol(O_j)}{S_n}
        = \frac{V}{S_n}.
    \end{align*}
    Since all iterations are independent, at the end of the algorithm we have $N \sim \mathrm{Bin}(40n, V/S_n)$. Hence $\widetilde{V}$ is an unbiased estimator for $V$.

	To analyze deviation, we observe that $V \geq \max_{i=1}^n \Vol(O_i) \geq S_n/n$. Therefore, $\E[N] = 40nV/S_n \geq 40$. Since $\E[N] \geq \mathrm{Var}[N]$, Chebyshev's inequality implies that
	\[
		\Pr \left( |N-\E[N]| \geq \frac{\E[N]}{2} \right)
		\leq \frac{4 \mathrm{Var}[N]}{(\E[N])^2}
		\leq \frac{4}{\E[N]}
		\leq 0.1.
	\]
	That is, with probability at least $0.9$ the output $\widetilde{V}$ is a 2-approximation to $V$.
\end{proof}

\subsection{Classifying boxes by shapes}

As our first step in the algorithm, we classify boxes by their shapes.
\begin{definition}
    Let $L_1,\ldots,L_d \in \Z$. 
    We say that a box $O \subset \R^d$ is of type $(L_1, \dots, L_d)$ if its side length in dimension $k$ is contained in $[2^{L_k}, 2^{L_k+1})$, for all $k \in [d]$.
\end{definition}

Using this definition, we partition the input boxes $O_1, \dots, O_n$ into \emph{classes} $C_1, \dots, C_m$ such that each class corresponds to one type of boxes. The notation is fixed from now on. For each $t \in [m]$, we denote $U_t := \bigcup_{O \in C_t} O \subset \R^d$, namely the union of boxes in class $C_t$.

Similar to $\textsc{appears}$, we may use orthogonal range searching to handle the so-called $\textsc{inClass}(x,t)$ query: Is a given point $x\in \mathbb{R}^d$ contained in~$U_t$?
\begin{lemma}
    \label{lem:in_class}
    We can build a data structure in $O(n\log ^{d+1} (n))$ time that answers $\textsc{inClass}(x,t)$ queries in $O(\log^{d+1}(n))$ time.
\end{lemma}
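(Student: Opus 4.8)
The plan is to mimic the proof of Lemma~\ref{lem:appears} almost verbatim, replacing the ``index cutoff'' auxiliary dimension by a ``class label'' auxiliary dimension. Concretely, for each input box $O_j$ let $t(j) \in [m]$ be the index of the class it belongs to, and map $O_j$ to the higher-dimensional box
\[ O_j^{+} := O_j \times [t(j), t(j)] \subset \R^{d+1}, \]
i.e.\ we lift $O_j$ into $\R^{d+1}$ by pinning the new coordinate to the integer $t(j)$. (Equivalently one can use the thin slab $O_j \times [t(j)-\tfrac12, t(j)+\tfrac12]$ if degenerate boxes are undesirable; this changes nothing.) Then a point $(x,t) \in \R^{d+1}$ with $t$ an integer is contained in some $O_j^{+}$ if and only if there is a box $O_j$ with $t(j)=t$ and $x \in O_j$, which is exactly the condition $x \in U_t$ that $\textsc{inClass}(x,t)$ must decide.

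Next I would invoke the standard orthogonal range searching machinery: build a multi-level segment tree over the set $\{O_1^{+}, \dots, O_n^{+}\} \subset \R^{d+1}$, which supports stabbing queries (``does any stored box contain the query point?'') in $O(\log^{d+1}(n))$ time after $O(n \log^{d+1}(n))$ preprocessing; see \cite[Section~10.4]{DBLP:books/lib/BergCKO08}. This is the same black box already used in Lemma~\ref{lem:appears}, just in one more dimension's worth of the same construction, so the bounds are identical. To answer $\textsc{inClass}(x,t)$ we simply query whether the point $(x,t)$ is stabbed by any $O_j^{+}$, and return the answer.

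Correctness is then immediate from the equivalence above, and the time and space bounds are exactly those quoted from the segment-tree construction, so there is essentially no obstacle here — the lemma is a routine variant of Lemma~\ref{lem:appears}. The only point requiring a word of care is that $m$, the number of distinct classes, is at most $n$, so the auxiliary coordinates range over a set of size $O(n)$ and do not inflate the $\log$ factors; this is clear since each class contains at least one box. If one prefers to avoid zero-width boxes in the range searching structure, replacing $[t(j),t(j)]$ by $[t(j)-\tfrac12,t(j)+\tfrac12]$ and querying the integer point $(x,t)$ works identically, since distinct integers are separated by more than the slab width.
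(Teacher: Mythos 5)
Your construction — lifting each box to $O_j \times \{t(j)\}$ (or a thin slab around it) and answering $\textsc{inClass}(x,t)$ by a stabbing query at $(x,t)$ in a multi-level segment tree — is exactly the paper's proof, which maps each $O_i \in C_t$ to $O_i \times \{t\} \subset \R^{d+1}$ and queries the point $x \times \{t\}$. Correct, and essentially verbatim the same approach.
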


\begin{proof}
    Similar to the proof of Lemma \ref{lem:appears}, we transform each $O_i \in C_t$ to a higher-dimensional box
    \[ O_i \times \{t\} \subset \mathbb{R}^{d+1} \]
    and build a multi-level segment tree on top. The query $\textsc{inClass}(x,t)$ is thus implemented by querying the point $x\times \{t\} \in \mathbb{R}^{d+1}$ in the segment tree. 
\end{proof}

\subsubsection*{Sampling from a class}
The next lemma shows that we can $p$-sample from any $U_t$ efficiently by rejection sampling.
\begin{lemma}
    \label{lem:p-sample}
    Given $t \in [m]$, $p \in [0,1]$ and the data structure from Lemma~\ref{lem:in_class}, one can generate a $p$-sample of $U_t$ in expected time $O(|C_t| \log |C_t| + p \cdot \Vol(U_t) \cdot \log^{d+1}(n))$.
\end{lemma}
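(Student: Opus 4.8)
The plan is to $p$-sample $U_t$ by rejection sampling against a grid tailored to the common shape of the boxes in $C_t$. Let $(L_1,\dots,L_d)$ be the type of $C_t$ and consider the axis-aligned grid whose cells are the boxes $\prod_{k=1}^{d}[m_k2^{L_k},(m_k+1)2^{L_k}]$ for $m\in\Z^d$; every cell has volume $v:=\prod_{k}2^{L_k}$. Let $G$ be the set of cells met by at least one box of $C_t$. Since each box of $C_t$ has side length in $[2^{L_k},2^{L_k+1})$, its projection onto axis $k$ is shorter than two cell widths, hence meets at most three consecutive cells in that coordinate and at most $3^d$ cells overall; for constant $d$ these can be enumerated in $O(1)$ time per box. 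So $|G|=O(|C_t|)$ and, removing duplicates by sorting, $G$ can be computed in $O(|C_t|\log|C_t|)$ time.

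Next I would build the sample. For each $g\in G$ draw $K_g\sim\mathrm{Pois}(p\cdot v)$ and place $K_g$ independent uniform points in $g$; since the cells are disjoint up to measure zero, the resulting point set $S'$ is a $p$-sample of $\bigcup_{g\in G}g$ by the disjoint-union property of $p$-samples. Then query $\textsc{inClass}(x,t)$ for every $x\in S'$ and keep exactly those $x$ with a positive answer, obtaining $S$. Every box of $C_t$ is covered by the cells it meets (the cells tile $\R^d$), so $U_t\subseteq\bigcup_{g\in G}g$, and therefore the restriction property of $p$-samples shows that $S=S'\cap U_t$ is a $p$-sample of $U_t$, as required.

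The only quantity not under immediate control is $\Vol(\bigcup_{g\in G}g)=|G|\cdot v$, because the expected number of points drawn is $p\cdot\Vol(\bigcup_{g\in G}g)$ and each triggers one $\textsc{inClass}$ query at cost $O(\log^{d+1}n)$. The crux of the proof is the geometric claim $\Vol(\bigcup_{g\in G}g)\le 4^d\,\Vol(U_t)$, which I would establish by a two-step charging argument. First, every cell of $G$ has a corner in $U_t$: if a box $O=\prod_k[a_k,b_k]$ of $C_t$ meets a cell, then in each coordinate the interval $[a_k,b_k]$ has length at least $2^{L_k}$ and meets the cell's coordinate interval of length exactly $2^{L_k}$, so it contains one of that interval's two endpoints; choosing such an endpoint in each coordinate yields a corner of the cell inside $O\subseteq U_t$. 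Mapping each cell of $G$ to such a corner and noting that a grid point is a corner of only $2^d$ cells gives $|G|\le 2^d\,|Z|$, where $Z$ is the set of grid points contained in $U_t$. Second, fix for each $z\in Z$ a box $O\in C_t$ with $z\in O$; a short computation shows that the overlaps of $O$ with the two cells adjacent to $z$ in coordinate $k$ always sum to at least $2^{L_k}$, so the total overlap of $O$ (hence of $U_t$) with the $2^d$ cells incident to $z$ is at least $\prod_k2^{L_k}=v$. Summing this over $z\in Z$ and exchanging the order of summation — using that each cell has $2^d$ corners and that the cells tile space, so their $U_t$-masses sum to $\Vol(U_t)$ — gives $|Z|\cdot v\le 2^d\,\Vol(U_t)$. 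Multiplying the two bounds yields $\Vol(\bigcup_{g\in G}g)=|G|\cdot v\le 4^d\,\Vol(U_t)$.

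It remains to assemble the running time. Computing $G$ costs $O(|C_t|\log|C_t|)$; drawing the Poisson counts and scattering the points costs $O(|G|+p\cdot\Vol(\bigcup_{g\in G}g))=O(|C_t|+p\cdot\Vol(U_t))$ in expectation (using that a $\mathrm{Pois}(\lambda)$ variate can be generated in expected $O(1+\lambda)$ time); and the $\textsc{inClass}$ queries cost $O\bigl(p\cdot\Vol(\bigcup_{g\in G}g)\cdot\log^{d+1}n\bigr)=O\bigl(p\cdot\Vol(U_t)\cdot\log^{d+1}n\bigr)$ in expectation, both bounds invoking the geometric claim and constancy of $d$. Adding these gives the claimed bound $O(|C_t|\log|C_t|+p\cdot\Vol(U_t)\cdot\log^{d+1}n)$. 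The main obstacle is exactly the geometric claim $\Vol(\bigcup_{g\in G}g)=O(\Vol(U_t))$; the rest is bookkeeping together with the standard behaviour of $p$-samples under restriction and disjoint union.
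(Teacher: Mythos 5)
Your proposal is correct and follows the same overall strategy as the paper: overlay a grid whose cells match the type $(L_1,\dots,L_d)$ of $C_t$, collect the $O(3^d\,|C_t|)$ cells meeting $U_t$ in $O(|C_t|\log|C_t|)$ time, realize a Poisson point process of intensity $p$ on their union, and filter with $\textsc{inClass}$ queries, invoking the restriction property of $p$-samples. The two cosmetic differences are that the paper draws one global count $K\sim\mathrm{Pois}(p\cdot\Vol(U))$ and places $K$ points by first picking a uniformly random cell, whereas you draw independent per-cell counts $K_g\sim\mathrm{Pois}(p\,v)$ and appeal to the disjoint-union property; both are valid realizations of the same process. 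The one place where you genuinely diverge is the key volume comparison: the paper disposes of $\Vol(\bigcup_{g\in G}g)\le 3^d\,\Vol(U_t)$ in a single sentence (``the volume of any cell is at most the volume of a box in $C_t$''), which really only bounds $|G|\cdot v$ by $3^d\sum_{O\in C_t}\Vol(O)$ and so leaves the overlap between boxes unaddressed; your two-step charging argument --- every cell of $G$ has a grid corner in $U_t$, and each grid point of $U_t$ certifies at least $v$ units of $U_t$-volume among its $2^d$ incident cells --- handles overlapping boxes explicitly and yields the slightly weaker but fully justified constant $4^d$, which is all the lemma needs for constant $d$. In short, your proof is, if anything, more careful than the paper's on the one nontrivial geometric step.
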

\begin{proof}
    Let $(L_1, \dots, L_d)$ be the type of class $C_t$.
    We subdivide $\R^d$ into the grid
    \[
        \mathcal{G}_{\infty} := \{ [i_1\,2^{L_1}, (i_1+1)\,2^{L_1}) \times \dots \times [i_d\,2^{L_d}, (i_d+1)\,2^{L_d}) \mid i_1, \dots, i_d \in \mathbb{Z} \}.
    \]
    We call each element of $\mathcal{G}_\infty$ a \emph{cell}. Let $\mathcal{G} := \{ G \in \mathcal{G}_\infty \mid G \cap U_t \neq \emptyset \}$ be the set of cells that intersect with $U_t$. Let $U := \bigcup_{G \in \mathcal{G}} G$.
    
    First we create a $p$-sample $S$ of $U$ as follows.
    Generate $K \sim \mathrm{Pois}(p\cdot\Vol(U))$, the number of points we are going to include. Then sample $K$ points uniformly at random from $U$ by repeating the following step $K$ times:
    Select a cell $G \in \mathcal{G}$ uniformly at random and then sample a point from $G$ uniformly at random.
    The sampled points constitute our set $S$.
    
    Next we compute $S \cap U_t$:
    For each $x \in S$, we query $\textsc{inClass}(x,t)$; if the answer is true then we keep $x$, otherwise we discard it.
    The resulting set $S \cap U_t$ is a $p$-sample of $U_t$, since restricting to a fixed subset preserves the $p$-sample property.

    Before we analyze the running time, we show that $U_t$ makes up a decent proportion of~$U$.
    Recall that every box in class $C_t$ is of type $(L_1, \dots ,L_d)$.
    Consider the projection to any dimension $k \in [d]$. Each projected box from $C_t$ can intersect at most three projected cells from $\mathcal{G}$. So each box from $C_t$ intersects at most $3^d$ cells from $\mathcal{G}$, implying that $|\mathcal{G}| \leq 3^d \, |C_t|$. Moreover, since the volume of any cell is at most the volume of a box in $C_t$, we have $\Vol(U) \leq 3^d \, \Vol(U_t)$.

    Recall that we assume $d$ to be constant and hence $3^d = O(1)$.
    The computation of~$\mathcal{G}$ takes $O(|\mathcal{G}| \log |\mathcal{G}|) \subseteq O(|C_t| \log |C_t|)$ time. The remaining time is dominated by the \textsc{inClass} queries. The expected size of $S$ is $p \cdot \Vol(U) \leq 3^d\,p\, \Vol(U_t)$. As we query \textsc{inClass} once for each point of $S$, the total expected time is $O(p \cdot \Vol(U_t) \cdot \log^{d+1}(n))$ by Lemma~\ref{lem:in_class}.
\end{proof}

\subsubsection*{Classes do not overlap much}
We show the following interesting property of classes, that the sum of their volumes is within a polylogarithmic factor of the total volume $V$.

\begin{lemma}
    \label{lem:quotient}
    We have $\sum_{t=1}^m \Vol(U_t) \leq 2^{3d+1} \log^d (n) \cdot V$.
\end{lemma}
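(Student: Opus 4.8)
The plan is to prove the statement by induction on the dimension $d$, peeling off one coordinate per step; each step will cost a multiplicative factor of $O(\log n)$, which is what produces the $\log^d(n)$. I will prove the bound with $n$ replaced by \emph{any} upper bound on the number of boxes, so that the inductive hypothesis survives when we pass to a sub‑instance.

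The base case $d=1$ already contains the key mechanism. Here boxes are intervals and a class is just a length–scale $\ell\in\Z$; let $\ell^{\max}$ be the largest scale present. I would split the classes into the $O(\log n)$ many \emph{top} classes, whose scale lies in $[\ell^{\max}-\log_2 n,\ \ell^{\max}]$, and the remaining \emph{low} classes. Every top class $C_t$ has $U_t\subseteq\bigcup_i O_i$, hence $\Vol(U_t)\le V$, so the top classes contribute at most $O(\log n)\cdot V$. For the low classes, the class of scale $\ell$ has at most $n$ intervals of length $<2^{\ell+1}$, so $\sum_{\ell<\ell^{\max}-\log_2 n}|C_{(\ell)}|\,2^{\ell+1}$ is a geometric series dominated by its largest term $\approx n\cdot 2^{\ell^{\max}-\log_2 n}=2^{\ell^{\max}}\le V$ (the longest interval lies inside the union). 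Hence low classes contribute $O(V)$, and $\sum_t\Vol(U_t)\le O(\log n)\cdot V$.

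For the inductive step I would fix the first coordinate and slice. Group the $d$‑dimensional classes by their \emph{projected type} $\hat t=(L_2,\dots,L_d)$, so that classes sharing $\hat t$ differ only in their first scale $L_1$; write $\hat U_{\hat t}$ for the union of all boxes of projected type $\hat t$. Within one such group, running the same top/low split of the base case in the $L_1$‑direction gives $\sum_{\mathrm{proj}(t)=\hat t}\Vol(U_t)\le O(\log n)\cdot\Vol(\hat U_{\hat t})$: the top scales contribute $O(\log n)$ copies of $\Vol(\hat U_{\hat t})$ (each $U_t\subseteq\hat U_{\hat t}$), and the geometric tail of the low scales is controlled because a box of projected type $\hat t$ realizing the maximal first scale $L_1^{\max}(\hat t)$ witnesses $2^{L_1^{\max}(\hat t)}\cdot\prod_{k\ge2}2^{L_k}\le\Vol(\hat U_{\hat t})$. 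It remains to bound $\sum_{\hat t}\Vol(\hat U_{\hat t})$. By Fubini, $\Vol(\hat U_{\hat t})=\int_{\R}\Vol\bigl(\hat U_{\hat t}^{(s)}\bigr)\,ds$ over the first‑coordinate value $s$, and for each fixed $s$ the slices $\{\hat U_{\hat t}^{(s)}\}_{\hat t}$ are precisely the class–unions, under type–classification in $\R^{d-1}$, of the box collection $\{\pi_{2..d}(O):s\in\pi_1(O)\}$ (whose size is at most $n$); the inductive hypothesis therefore gives $\sum_{\hat t}\Vol(\hat U_{\hat t}^{(s)})\le 2^{3(d-1)+1}\log^{d-1}(n)\cdot\Vol\bigl((\bigcup_i O_i)^{(s)}\bigr)$, and integrating over $s$ replaces the last factor by $V$. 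Composing the two estimates and checking the constants yields $\sum_t\Vol(U_t)\le 2^{3d+1}\log^d(n)\cdot V$.

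The main obstacle is finding the right recursion — specifically, the observation that one must \emph{regroup the classes by projected type before slicing}. A naive per‑coordinate split fails, because a box that is thin in one coordinate may be arbitrarily long in the others, so ``thin in coordinate $k$'' alone does not make $\Vol(U_t)$ small; regrouping by projected type is exactly what turns the first coordinate into a genuine one‑dimensional sub‑problem, after which slicing exposes a clean $(d-1)$‑dimensional instance. A secondary, purely technical point is bookkeeping the per‑level factor $\log_2 n+O(d)$ so that it multiplies up to at most $(8\log n)^d$; for the finitely many degenerate regimes (very small $n$ together with very large $d$) one simply falls back on the trivial bound $\sum_t\Vol(U_t)\le m\cdot V\le n\cdot V$.
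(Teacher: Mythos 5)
Your proof is correct in substance but follows a genuinely different route from the paper. The paper's argument is non-inductive: it greedily extracts a maximal set $T$ of pairwise \emph{dissimilar} classes in decreasing order of volume, charges every other class to a similar witness in $T$ (each witness absorbing at most $8^d\log^d(n)$ classes, Observation~\ref{obs:similar}), and then bounds $\sum_{t\in T}\Vol(U_t)\le 2V$ by inclusion--exclusion, using that dissimilar boxes have intersection at most $2V/n^4$ (Observation~\ref{obs:dissimilar}). You instead induct on the dimension: a top/low split over the first-coordinate scale bounds $\sum_{\mathrm{proj}(t)=\hat t}\Vol(U_t)$ by $(\log_2 n+O(d))\cdot\Vol(\hat U_{\hat t})$ (the top scales trivially, the low scales by a geometric series anchored at a witness box of maximal first scale), and Fubini slicing turns $\sum_{\hat t}\Vol(\hat U_{\hat t})$ into a $(d-1)$-dimensional instance of the same statement on each slice $\{\pi_{2..d}(O):s\in\pi_1(O)\}$; your insistence on proving the statement for \emph{any} upper bound $n$ on the number of boxes is exactly what makes the induction close, and the identification of the slice's class-unions with $\hat U_{\hat t}^{(s)}$ is correct. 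Your approach is more hands-on and makes transparent why one factor of $\log n$ accrues per coordinate; the paper's is shorter and reuses the similar/dissimilar machinery. Two minor caveats: (i) as you note, the naive per-level factor $\log_2 n+2^d+O(1)$ only yields $2^{O(d^2)}\log^d(n)$ in the regime $2^d\gg\log n$; to recover a $2^{O(d)}$ constant you should widen the top range to $\log_2 n+\Theta(d)$ scales so the geometric tail contributes $O(1)\cdot\Vol(\hat U_{\hat t})$, and fall back to the trivial bound $\sum_t\Vol(U_t)\le nV$ when $n<2^d$ --- with that adjustment the stated constant $2^{3d+1}$ (or something of that form) is attainable; (ii) these quantitative details would need to be written out, but they are routine and you have flagged them correctly.
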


We later exploit it to draw $p$-samples from $\bigcup_{i=1}^n O_i = \bigcup_{t=1}^m U_t$ efficiently.
Towards proving the lemma, let us collect some simple observations.

\begin{definition}
    We say that a class of type $(L_1,\dots,L_d)$ is \emph{similar} to a class of type $(L_1',\dots,L_d')$ if $2^{|L_k - L'_{k}|} < n^4$ for all $k \in [d]$. Otherwise they are said to be \emph{dissimilar}.
\end{definition}

\begin{observation}
    \label{obs:similar}
    Every class is similar to at most $8^d \log^d(n)$ classes.
\end{observation}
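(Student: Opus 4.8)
The plan is to count, for a fixed class of type $(L_1, \dots, L_d)$, how many types $(L_1', \dots, L_d')$ can be similar to it, and then observe that distinct classes have distinct types. By definition, similarity requires $2^{|L_k - L_k'|} < n^4$ for every dimension $k \in [d]$. Taking logarithms, this is equivalent to $|L_k - L_k'| < 4 \log_2 n$ for each $k$, i.e.\ $L_k'$ must lie in an interval of integers of length strictly less than $8 \log_2 n$ centered at $L_k$.

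First I would bound the number of admissible values of $L_k'$ in a single dimension: since $L_k, L_k' \in \Z$ and $|L_k - L_k'| < 4 \log_2 n$, there are at most $8 \log_2 n$ choices for $L_k'$ (the integers strictly between $L_k - 4\log_2 n$ and $L_k + 4 \log_2 n$). Next, since the constraints across the $d$ dimensions are independent, the number of types $(L_1', \dots, L_d')$ similar to the given one is at most $(8 \log_2 n)^d = 8^d \log^d(n)$ (interpreting $\log$ as $\log_2$, consistently with the rest of the paper). Finally, since each class corresponds to exactly one type and similarity of classes is determined entirely by their types, the number of classes similar to a fixed class is at most this same quantity $8^d \log^d(n)$, which is the claim.

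I do not expect any genuine obstacle here — the statement is essentially a counting argument packaged around the definition of similarity. The only point requiring a little care is the off-by-one bookkeeping in the bound ``at most $8\log_2 n$ integers in an open interval of length $8 \log_2 n$,'' and making sure the base of the logarithm matches the convention implicit in the definition (where $2^{|L_k - L_k'|} < n^4$ naturally yields a $\log_2$), so that the stated constant $8^d$ is exactly right.
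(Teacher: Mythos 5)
Your proof is correct and is essentially identical to the paper's: fix the type $(L_1,\dots,L_d)$, note that in each dimension the condition $2^{|L_k-L_k'|}<n^4$ leaves at most $8\log n$ integer choices for $L_k'$, and multiply over the $d$ dimensions. The paper states this in two sentences without further elaboration, so your extra bookkeeping remarks are fine but not needed.
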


\begin{proof}
    Fix a type $(L_1,\dots,L_d)$. For each $k \in [d]$, there are at most $8 \log n$ many integers $L_k'$ such that $2^{|L_k - L_k'|} < n^4$.
\end{proof}

\begin{observation}
    \label{obs:dissimilar}
    If two boxes $O,O'$ are in dissimilar classes, then $\Vol(O \cap O') \leq 2V/n^4$.
\end{observation}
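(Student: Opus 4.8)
The plan is to locate the dimension in which the two types are far apart and exploit that the intersection inherits the \emph{smaller} side length there, while in every other dimension it is bounded by the \emph{larger} box. Concretely, write $(L_1,\dots,L_d)$ for the type of $O$ and $(L_1',\dots,L_d')$ for the type of $O'$. By definition of dissimilarity, there is some coordinate $k$ with $2^{|L_k - L_k'|} \geq n^4$; without loss of generality assume $L_k \leq L_k'$, so that $O'$ is the box with the much longer side in dimension $k$.

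Next I would compare the two side lengths in dimension $k$. The side length of $O$ in dimension $k$ is less than $2^{L_k+1} = 2\cdot 2^{L_k} \leq 2\cdot 2^{L_k'}/n^4$, while the side length of $O'$ in dimension $k$ is at least $2^{L_k'}$; hence the $k$-th side of $O$ is at most $(2/n^4)$ times the $k$-th side of $O'$. Since $O \cap O'$ is again an axis-aligned box, its side length in dimension $k$ is at most $\min$ of the two, hence at most $(2/n^4)$ times the $k$-th side of $O'$; and in every other dimension its side length is at most that of $O'$. Multiplying the $d$ side lengths gives $\Vol(O \cap O') \leq (2/n^4)\,\Vol(O')$.

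Finally I would invoke the trivial containment $O' \subseteq \bigcup_{i=1}^n O_i$ (each class consists of input boxes), so $\Vol(O') \leq V$, which yields $\Vol(O \cap O') \leq 2V/n^4$ as claimed.

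There is no real obstacle here; the only points demanding a little care are the bookkeeping of the factor $2$ coming from the half-open type intervals $[2^{L_k},2^{L_k+1})$, and making sure the bound is stated against $V$ (via the larger box $O'$, whose volume dominates) rather than against $\Vol(O)$, which could be far smaller than $V$.
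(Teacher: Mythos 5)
Your proof is correct and follows essentially the same route as the paper's: identify the dimension $k$ where the types differ by a factor of at least $n^4$, observe that the intersection's side in that dimension is at most a $2/n^4$ fraction of the longer box's side while all other sides are bounded by that box's sides, and conclude via $\Vol(O')\le V$. The only difference is cosmetic (the paper takes the opposite WLOG and bounds against $O$ instead of $O'$).
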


\begin{proof}
    Let $(L_1, \dots, L_d)$ be the type of $O$, and $(L'_1, \dots, L'_d)$ be the type of $O'$.
    Since the boxes belong to dissimilar classes, there is a dimension $k \in [d]$ such that $2^{|L_k - L_k'|} \geq n^4$.
    Without loss of generality, assume $2^{L_k - L_k'} \geq n^4$; the other case is symmetric.
    Let $[a_k,b_k]$ and $[a_k',b_k']$ be the projections of $O$ and $O'$ onto dimension $k$, respectively.
    Note that $b_k-a_k \in [2^{L_k},2^{L_k+1})$ and $b_k'-a_k' \in [2^{L_k'},2^{L_k'+1})$.
    So we have $\frac{b_k-a_k}{b'_k-a'_k} \geq 2^{L_k - (L_k' + 1)} \geq n^4/2$.
    In other words, at most a $2/n^4$ fraction of the interval $[a_k,b_k]$ intersects the interval $[a'_k,b'_k]$.
    Hence, $\Vol(O \cap O') \leq \Vol(O) \cdot 2/n^4 \leq 2V/n^4.$ \qedhere
\end{proof}

\begin{proof}[Proof of Lemma~\ref{lem:quotient}]
    Without loss of generality assume $\Vol(U_1) \geq \cdots \geq \Vol(U_m)$.
    We construct a set of indices $T \subseteq [m]$ by the following procedure:
    \begin{itemize}
        \item Initialize $T = \emptyset$.
        \item For $t = 1, \dots, m$, if $C_t$ and $C_s$ are dissimilar for all $s \in T$, then add $t$ to $T$.
    \end{itemize}
    
    For $t \in [m]$, we have $t \not\in T$ only if there exists an $s \in T$ such that $C_s,C_t$ are similar and $\Vol(U_s) \geq \Vol(U_t)$; we call $s$ a \emph{witness} of $t$. If multiple witnesses exist, then we pick an arbitrary one. Conversely, every $s \in T$ can be a witness at most $8^d \log^d(n)$ times by Observation~\ref{obs:similar}. Hence,
    \begin{equation} \label{eq:1}
    \sum_{t=1}^m \Vol(U_t) \le 8^d \log^d (n) \cdot \sum_{t \in T} \Vol(U_t).
    \end{equation}

    It remains to bound $\sum_{t \in T} \Vol(U_t)$. Consider any distinct $s,t \in T$. By construction, $C_s$ and $C_t$ are dissimilar; and each class contains at most $n$ boxes. So $\Vol(U_s \cap U_t) \leq n^2 \cdot (2V/n^4) = 2V/n^2$ by Observation \ref{obs:dissimilar}. Using this and inclusion-exclusion, we bound
    \begin{align*}
        \sum_{t \in T} \Vol(U_t)
        & ~\leq~ \Vol\left( \bigcup_{t \in T} U_t \right) + \sum_{\{s,t\} \subseteq T} \Vol(U_s \cap U_t) \\
        & ~\leq~ V + \binom{m}{2} \, \frac{2V}{n^2} ~\leq~ 2V.
    \end{align*}
    Plugging this into the right-hand side of Expression~(\ref{eq:1}), we obtain the lemma statement.
\end{proof}

\subsection{Joining the classes}
    Recall that $C_1, \dots, C_m$ are the classes of the input boxes and $U_1, \dots, U_m$ their respective unions. Assume without loss of generality that the boxes are ordered in accordance with the class ordering, that is, $C_1 = \{O_1, \cdots, O_{i_1}\}$ form the first class, $C_2 = \{O_{i_1+1}, \cdots, O_{i_2}\}$ form the second class, and so on. In general, we assume that $C_t = \{O_{i_{t-1}+1}, \ldots, O_{i_t}\}$ for all $t \in [m]$, where $0 = i_0 < i_1 < \ldots < i_m = n$.
    
    Let $D_t := U_t \setminus (\bigcup_{s=t+1}^m U_s)$ be the points in $U_t$ that are not contained in later classes. Note that $D_1, \dots, D_m$ is a partition of $\bigcup_{t=1}^m U_t = \bigcup_{i=1}^n O_i$. Hence, to generate a $p$-sample of $\bigcup_{i=1}^n O_i$, it suffices to draw $p$-samples from each $D_t$ and then take their union.\footnote{This idea has previously been used on objects, by considering the difference $D'_i := O_i \setminus (\bigcup_{j=i+1}^n O_j)$~\cite{KarpL85, MeelV021}, while we use this idea on classes.}
    To this end, we draw a $p$-sample $S_t$ from $U_t$ via Lemma \ref{lem:p-sample}. Then we remove all $x \in S_t$ for which $\textsc{appears}(x, i_t + 1)=\text{true}$; these are exactly the points that appear in a later class. What remains is a $p$-sample of $D_t$. 
    The union of these sets thus is a $p$-sample of $\bigcup_{i=1}^n O_i$, and we can use the size of this $p$-sample to estimate the volume $V$ of $\bigcup_{i=1}^n O_i$. The complete algorithm is summarized in Algorithm \ref{alg:kmp}.
    
    \begin{algorithm}[ht]
    \caption{Approximation of Klee's measure}
    \label{alg:kmp}
    \begin{enumerate}
        \item Partition the boxes into classes $C_1, \dots, C_m$. Relabel the boxes so that their indices are in accordance with the class ordering, i.e., $C_t = \{ O_{i_{t-1}+1}, \ldots, O_{i_t} \}$ for all $t \in [m]$.
        \item Build the data structures from Lemmas~\ref{lem:appears} and \ref{lem:in_class}.
        \item Obtain a crude estimate $\widetilde{V}$ by Lemma \ref{lem:crude}. Set $p := 8/(\eps^2 \widetilde{V})$.
        \item For $t = 1, \dots, m$ do:
        \begin{itemize}
            \item Draw a $p$-sample $S_t$ from the union $U_t = \bigcup_{O \in C_t} O$ via Lemma \ref{lem:p-sample}.
            \item Compute $|S_t'|$ where $S_t' := \{x \in S_t : \textsc{appears}(x,i_t+1) = \text{false} \}$.
        \end{itemize}
        \item Output $\sum_{t=1}^m |S_t'|/p$.
    \end{enumerate}
    \end{algorithm}
	
	\begin{lemma}
        \label{lem:kmp-quality}
		Conditioned on $\widetilde{V} \leq 2V$, Algorithm \ref{alg:kmp} outputs a $(1+\eps)$-approximation to $V$ with probability at least $3/4$.
	\end{lemma}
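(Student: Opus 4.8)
The plan is to show that the output $\sum_{t=1}^m |S_t'|/p$ is, by construction, the cardinality of a $p$-sample of $\bigcup_{i=1}^n O_i$, and then to apply a concentration bound for the Poisson distribution. First I would argue the distributional claim: Lemma~\ref{lem:p-sample} gives that $S_t$ is a $p$-sample of $U_t$; restricting to the points with $\textsc{appears}(x,i_t+1)=\text{false}$ keeps exactly those points of $S_t$ lying in $D_t = U_t \setminus \bigcup_{s>t} U_s$, so by property~(i) of $p$-samples, $S_t' = S_t \cap D_t$ is a $p$-sample of $D_t$. Since $D_1,\dots,D_m$ partition $\bigcup_i O_i$ and the $S_t$ are drawn independently, property~(ii) (applied inductively over the disjoint pieces) shows that $S := \bigcup_t S_t'$ is a $p$-sample of $\bigcup_i O_i$. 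Hence $|S| = \sum_t |S_t'| \sim \mathrm{Pois}(pV)$, and the output equals $|S|/p$.

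Next I would control the deviation. We have $\E[|S|/p] = V$ and $\mathrm{Var}[|S|/p] = \mathrm{Var}[|S|]/p^2 = pV/p^2 = V/(pV) \cdot V$, i.e.\ $\mathrm{Var}[|S|/p] = V/p$. Using the conditioning $\widetilde V \le 2V$, we get $p = 8/(\eps^2 \widetilde V) \ge 4/(\eps^2 V)$, so $\E[|S|] = pV \ge 4/\eps^2$. Then by Chebyshev's inequality,
\[
\Pr\!\left( \bigl| |S|/p - V \bigr| \ge \eps V \right)
\le \frac{\mathrm{Var}[|S|/p]}{\eps^2 V^2}
= \frac{V/p}{\eps^2 V^2}
= \frac{1}{\eps^2 p V}
= \frac{1}{\E[|S|]}
\le \frac{1}{4}.
\]
Thus with probability at least $3/4$ the output lies in $[(1-\eps)V, (1+\eps)V]$, which is a $(1+\eps)$-approximation (up to the standard reparametrization of $\eps$ by a constant factor, or one simply notes $(1-\eps)V \ge V/(1+\eps)$ for the stated guarantee; a constant rescaling of $p$ handles this cleanly).

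The main obstacle is the first part — carefully justifying that the independent union $\bigcup_t S_t'$ really is a $p$-sample of the whole union, not merely that each piece is a $p$-sample of $D_t$. The subtlety is that property~(ii) of $p$-samples is stated for the union of two disjoint sets, so one must iterate it and also invoke the mutual independence of $S_1,\dots,S_m$ (which holds because each invocation of Lemma~\ref{lem:p-sample} uses fresh randomness). Once the sample $S$ is identified as a genuine $p$-sample of $\bigcup_i O_i$, so that $|S| \sim \mathrm{Pois}(pV)$ exactly, everything else is the routine Chebyshev computation above, essentially mirroring the deviation analysis already carried out in the proof of Lemma~\ref{lem:crude}. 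I would also remark that the $\textsc{appears}$ queries are deterministic given $x$, so the filtering step introduces no extra randomness and does not disturb the distributional argument.
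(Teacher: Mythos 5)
Your proposal is correct and follows essentially the same route as the paper: identify $\bigcup_t S_t'$ as a $p$-sample of $\bigcup_i O_i$ via the partition $D_1,\dots,D_m$, conclude $\sum_t |S_t'| \sim \mathrm{Pois}(pV)$ with mean and variance $pV \ge 4/\eps^2$ under the conditioning, and apply Chebyshev to get failure probability at most $1/4$. The extra care you take about iterating property~(ii) with fresh randomness, and about the $(1-\eps)V$ versus $V/(1+\eps)$ convention, are fine but not points the paper dwells on.
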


	\begin{proof}
        Note that for all $t \in [m]$, the set $S_t'$ is a $p$-sample of $D_t$. Since $D_1, \dots, D_m$ partition $\bigcup_{t=1}^m U_t = \bigcup_{i=1}^n O_i$, their union $\bigcup_{t=1}^m S_t'$ is a $p$-sample of $\bigcup_{i=1}^n O_i$. It follows that $N := \sum_{t=1}^m |S_t'| \sim \mathrm{Pois}(pV)$.
        The expectation and variance of $N$ are $pV = 8V/(\eps^2\widetilde{V}) \geq 4/\eps^2$. So by Chebyshev,
        \[
            \Pr(|N - pV| > \eps pV )
            \leq \frac{\mathrm{Var}[N]}{(\eps pV)^2}
            \leq \frac{1}{4}.
        \]
        Hence, with probability at least $3/4$, the output $N/p$ is a $(1+\eps)$-approximation to~$V$.
	\end{proof}

    \begin{lemma}
        \label{lem:kmp-time}
        Conditioned on $\widetilde{V} \geq \frac{V}{2}$, Algorithm \ref{alg:kmp} runs in time $O \left((n + \frac{1}{\eps^2}) \cdot \log^{2d+1} (n)\right)$ in expectation.
    \end{lemma}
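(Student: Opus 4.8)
The plan is to account for the cost of each of the five steps of Algorithm~\ref{alg:kmp} separately and sum. Steps 1--3 are cheap and deterministic up to the success of Lemma~\ref{lem:crude}: partitioning the boxes into classes and relabeling them is essentially a sort, costing $O(n\log n)$; building the two orthogonal range searching structures costs $O(n\log^{d+1}(n))$ by Lemmas~\ref{lem:appears} and~\ref{lem:in_class}; and computing the crude estimate $\widetilde V$ costs $O(n\log^{d+1}(n))$ by Lemma~\ref{lem:crude}. So steps 1--3 contribute $O(n\log^{d+1}(n))$ in total, and the work left to bound is the main loop in step 4.

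For a fixed class $C_t$, step 4 does two things. Drawing the $p$-sample $S_t$ of $U_t$ via Lemma~\ref{lem:p-sample} takes expected time $O(|C_t|\log|C_t| + p\cdot\Vol(U_t)\cdot\log^{d+1}(n))$. Then, since $S_t$ is a $p$-sample of $U_t$, we have $\E|S_t| = p\cdot\Vol(U_t)$, so answering the $|S_t|$ many $\textsc{appears}(\cdot, i_t+1)$ queries that determine $S_t'$ costs an additional expected $O(p\cdot\Vol(U_t)\cdot\log^{d+1}(n))$ by Lemma~\ref{lem:appears}. By linearity of expectation, the expected cost of step 4 is therefore
\[
  O\!\left( \sum_{t=1}^m |C_t|\log|C_t| \;+\; \log^{d+1}(n)\cdot p\cdot\sum_{t=1}^m \Vol(U_t) \right).
\]
It remains to bound the two sums. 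Since $\sum_{t=1}^m |C_t| = n$ and every class is non-empty (so $m \le n$), the first sum together with the $O(1)$ bookkeeping per class is $O(n\log n)$. For the second, Lemma~\ref{lem:quotient} gives $\sum_{t=1}^m \Vol(U_t) \le 2^{3d+1}\log^d(n)\cdot V$, while the conditioning $\widetilde V \ge V/2$ combined with $p = 8/(\eps^2\widetilde V)$ yields $pV \le 16/\eps^2$. Hence $p\cdot\sum_{t=1}^m\Vol(U_t) = O(\log^d(n)/\eps^2)$, and step 4 runs in expected time $O(n\log n + \log^{2d+1}(n)/\eps^2)$. Adding the $O(n\log^{d+1}(n))$ from steps 1--3 gives overall expected running time $O(n\log^{d+1}(n) + \log^{2d+1}(n)/\eps^2) = O\big((n + \tfrac1{\eps^2})\log^{2d+1}(n)\big)$, as claimed.

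The one place to be careful -- and the main (if mild) obstacle -- is the interaction of the conditioning with the ``expected time'' bounds. The expectations in Lemmas~\ref{lem:crude} and~\ref{lem:p-sample} are over the internal randomness of the sampling subroutines; here the quantity $p$ is itself random, being a function of $\widetilde V$, and we are further conditioning on the good event $\{\widetilde V \ge V/2\}$. This causes no circularity: on that event the bound $pV \le 16/\eps^2$ holds for \emph{every} admissible value of $\widetilde V$, and the per-class expected costs are nondecreasing in $p$, so the bound above holds uniformly and survives taking the outer expectation over $\widetilde V$. The other point worth recording is that the conditioning event here is exactly the success event of Lemma~\ref{lem:crude}, which is why Lemmas~\ref{lem:crude}, \ref{lem:kmp-quality}, and~\ref{lem:kmp-time} can be combined cleanly in the proof of Theorem~\ref{thm:kmp}. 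Beyond these points the argument is just the routine summation sketched above, with Lemma~\ref{lem:quotient} doing the real work of keeping the total sampling volume within a polylogarithmic factor of $V$.
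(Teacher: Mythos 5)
Your proof is correct and follows essentially the same route as the paper's: account for steps 1--3 directly, bound step 4 per class via Lemmas~\ref{lem:p-sample} and~\ref{lem:appears}, and control $p\sum_t\Vol(U_t)$ using Lemma~\ref{lem:quotient} together with $p\le 16/(\eps^2 V)$ from the conditioning. Your extra remark on why the conditioning on $\widetilde V$ interacts harmlessly with the expectation bounds is a point the paper leaves implicit, but it does not change the argument.
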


    \begin{proof}
        Step 1 takes $O(n \log n)$ time: we can compute the side lengths of each box, determine its class, then sort the boxes by class index.
        Step 2 takes $O(n \log^{d+1}(n))$ time by Lemmas~\ref{lem:appears} and \ref{lem:in_class}.
        Step 3 takes $O(n \log^{d+1}(n))$ time by Lemma~\ref{lem:crude}.

        In step 4, iteration $t$, sampling $S_t$ costs expected time $O((i_t-i_{t-1}) \log (i_t-i_{t-1}) + p \Vol(U_t) \cdot \log^{d+1}(n))$ by Lemma \ref{lem:p-sample}, and computing $S_t'$ costs expected time $O((1 + p \Vol(U_t)) \cdot \log^{d+1}(n))$ by Lemma \ref{lem:appears}. Over all iterations, the expected running time is
        \[ O \left( \log^{d+1} (n) \cdot \left(n + p \, \sum_{t=1}^m \Vol(U_t) \right) \right). \]
        Substituting $p = 8 / (\eps^2 \tilde{V}) \leq 16 / (\eps^2 V)$ and applying Lemma~\ref{lem:quotient}, we can bound
        \[
            p \, \sum_{t=1}^m \Vol(U_t)
            \leq \frac{16}{\eps^2 V} \, \sum_{t=1}^m \Vol(U_t)
            \leq \frac{2^{3d+5} \log^d(n)}{\eps^2}.
        \]
        Hence the expected running time of step 4 is $O \left(\log^{2d+1} (n) \cdot (n + \frac{1}{\eps^2})\right)$.
        
        Finally, step 5 takes $O(n)$ time.
    \end{proof}
    
    \begin{proof}[Proof of Theorem \ref{thm:kmp}]
        Run Algorithm \ref{alg:kmp} with a time budget tenfold the bound in Lemma~\ref{lem:kmp-time}; abort the algorithm as soon as it spends more time than the budget allows. So the stated time bound is clearly satisfied.
        Now consider three bad events:
        \begin{itemize}
            \item $\widetilde{V} \not\in [\frac{V}{2}, 2V]$.
            \item $\widetilde{V} \in [\frac{V}{2}, 2V]$, but the algorithm is aborted.
            \item $\widetilde{V} \in [\frac{V}{2}, 2V]$ and the algorithm is not aborted, but it does not output a $(1+\eps)$-approximation to $V$.
        \end{itemize}
        By Lemma~\ref{lem:crude}, the first event happens with probability at most $0.1$.
        By Lemma~\ref{lem:kmp-time} and Markov's inequality, the second event happens with probability at most $0.1$. Lastly, by Lemma \ref{lem:kmp-quality}, the third event happens with probability at most $1/4$. So the total error probability is at most $0.1+0.1+\frac{1}{4} = \frac{9}{20}$. If none of the bad events happen, then the algorithm correctly outputs a $(1+\eps)$-approximation to $V$. The success probability of $\frac{11}{20}$ can be boosted to, say, $0.9$ by returning the median of a sufficiently large constant number of repetitions.  
    \end{proof}

\section{Lower bound for union volume estimation}\label{sec:lowerbound}
In this section, any randomized algorithm or protocol is assumed to have success probability at least $2/3$.
We consider the discrete version of union volume estimation in two dimensions, where each object $O_i$ is a finite subset of the integer lattice $\Z^2$. The goal is to $(1+\eps)$-approximate the cardinality $|O_1 \cup \ldots \cup O_n|$ of their union. The algorithm does not have direct access to the objects, but it can ask three forms of queries:
\begin{itemize}
    \item $\Vol(O_i)$: Return the cardinality $|O_i|$.
    \item $\Sample(O_i)$: Draw a uniform random point from $O_i$.
    \item $\Contains((a,b),O_i)$: Given a point $(a,b) \in \Z^2$, return whether $(a,b) \in O_i$ or not.
\end{itemize}
We aim to prove that $\Omega(n/\eps^2)$ queries are necessary. Later in Section \ref{sec:discrete-continuous} we translate this to a lower bound in the continuous space $\R^2$, thereby proving Theorem~\ref{thm:lowerbound}.

Our starting point is the Query-Gap-Hamming problem: The inputs are two (hidden) vectors $x,y \in \{-1,1\}^\ell$ and we can access one bit of $x$ or $y$ of our choice at a time. The goal is to distinguish the cases $\langle x, y \rangle \geq \sqrt{\ell}$ and $\langle x, y \rangle \leq -\sqrt{\ell}$ using as few accesses as possible. The following lemma is folklore.

\begin{lemma}
    \label{lem:gap-hamming}
    There exists a constant $\delta \in (0,1)$ with the following property. For every $\ell \in \N$, any randomized algorithm for Query-Gap-Hamming on vectors of length $\ell$ needs at least $\delta\ell$ accesses in expectation.
\end{lemma}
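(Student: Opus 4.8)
The plan is to reduce the communication-complexity lower bound for the standard (two-party) Gap-Hamming-Distance problem to this query version. Recall that in the classical \textsc{Gap-Hamming} problem, Alice holds $x \in \{-1,1\}^\ell$, Bob holds $y \in \{-1,1\}^\ell$, and they must decide whether $\langle x,y\rangle \geq \sqrt{\ell}$ or $\langle x,y\rangle \leq -\sqrt{\ell}$; it is a celebrated result (Chakrabarti--Regev, following Indyk--Woodruff and others) that any randomized protocol with success probability at least $2/3$ must exchange $\Omega(\ell)$ bits. I will take this as the known ingredient. Given a query algorithm $\Alg$ for Query-Gap-Hamming that makes $q$ accesses in expectation on worst-case inputs, I will simulate it by a communication protocol: Alice and Bob jointly run $\Alg$, and whenever $\Alg$ wants to read a bit $x_j$, Alice sends it (this costs $O(\log \ell)$ bits to identify $j$ if needed, or just $1$ bit if the protocol agrees on the index via shared state); whenever it wants $y_j$, Bob sends it. Since all of $\Alg$'s state is common knowledge (using public randomness), both parties can track which bit is requested next, so each access costs $O(1)$ bits of communication. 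Hence a protocol of cost $O(q)$ bits results, forcing $q = \Omega(\ell)$, and we set $\delta$ to absorb the constant.

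The main subtlety is the expected-versus-worst-case issue, since the lemma speaks of accesses \emph{in expectation}. This is handled cleanly: if $\Alg$ uses $q$ accesses in expectation and succeeds with probability $\geq 2/3$, then by Markov's inequality, truncating $\Alg$ after $6q$ accesses (and outputting an arbitrary answer if it has not halted) still succeeds with probability $\geq 2/3 - 1/6 = 1/2$; boosting by a constant number of independent repetitions and taking the majority restores success probability $2/3$ while keeping the access count $O(q)$ in the worst case. One can then feed this worst-case-bounded algorithm into the communication simulation, obtaining a protocol of worst-case cost $O(q)$ that solves \textsc{Gap-Hamming}, whence $q = \Omega(\ell)$.

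Alternatively — and this is arguably the cleaner route to write down — one avoids communication complexity entirely and argues directly via an information-theoretic or adversary argument. Here one would take $x,y$ uniformly random in $\{-1,1\}^\ell$ conditioned on the promise (the two cases each have constant probability under the uniform distribution since $\langle x,y\rangle$ concentrates around $0$ at scale $\sqrt\ell$), observe that the answer depends on all $\ell$ coordinates in a "sensitive" way — flipping a single unqueried coordinate $x_j$ changes $\langle x,y\rangle$ by $\pm 2$, and near the threshold this flips the correct answer with non-negligible probability — and conclude by a standard Yao-style argument that an algorithm reading only $o(\ell)$ coordinates cannot know the answer with probability bounded away from $1/2$. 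I expect the communication-complexity reduction to be the less error-prone to formalize, so I would lead with that, citing the $\Omega(\ell)$ bound for two-party \textsc{Gap-Hamming}, and remark that the folklore status of the present lemma comes precisely from this routine reduction.
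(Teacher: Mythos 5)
Your first (and preferred) route is exactly the paper's proof: simulate the query algorithm as a public-coin protocol for two-party Gap-Hamming, with each bit access costing one bit of communication, and invoke the Chakrabarti--Regev $\Omega(\ell)$ communication lower bound. Your Markov-truncation step to handle expected versus worst-case access count is a detail the paper elides but is correct and harmless; the proposal is sound.
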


\begin{proof}
    We reduce from Gap-Hamming, a communication problem where Alice holds a vector $x \in \{-1,1\}^\ell$, Bob holds a vector $y \in \{-1,1\}^\ell$, and their goal is to distinguish $\langle x, y \rangle \geq \sqrt{\ell}$ from $\langle x, y \rangle \leq -\sqrt{\ell}$ while communicating as few bits as possible. It is known that the two-way, public-coin randomized communication complexity of Gap-Hamming is $\Omega(\ell)$~\cite{ChakrabartiR11}. In verbose, there exists a constant $\delta \in (0,1)$ such that, for every $\ell \in \N$, any public-coin randomized protocol for Gap-Hamming needs at least $\delta\ell$ bits of communication.

    Now consider an arbitrary randomized algorithm $\cS$ for Query-Gap-Hamming. We transform it to a protocol between Alice and Bob. The two parties simulate $\cS$ synchronously, sharing a random tape. When $\cS$ tries to access $x_j$, Alice sends the bit to Bob. When $\cS$ tries to access $y_j$, Bob sends the bit to Alice. In the end $\cS$ outputs a number, which Alice and Bob use as the output of the protocol. The number of communicated bits is equal to the number of accesses by $\cS$. Since the former is at least $\delta \ell$, the latter needs to be so as well.
\end{proof}

From now on, let us fix the constant $\delta \in (0,1)$ guaranteed by Lemma \ref{lem:gap-hamming}. Let $n \in \N$, $\eps \in (0,1)$ be arbitrary and define $\ell := \frac{1}{36\eps^2}$. We reduce Query-Gap-Hamming on vectors of length $\ell$ to estimating the cardinality of the union of $2n$ objects in $\Z^2$.

\subsubsection*{The reduction}
From the hidden vectors $x,y \in \{-1,1\}^\ell$ we construct $2n$ objects $X_1, \dots, X_n, Y_1, \dots, Y_n \subset \Z^2$.
Let $R := \{(n+1,0), \dots, (n\ell+n,0)\}$. Take independent random permutations $\pi_1,\dots,\pi_\ell$ of $[n]$ and define
\[ X_i := R \cup \{(jn + \pi_j(i), x_j) : j \in [\ell]\} \]
for $i \in [n]$. Then take another set of independent random permutations $\tau_1,\dots,\tau_\ell$ and define
\[ Y_i := R \cup \{(jn + \tau_j(i), y_j) : j \in [\ell]\} \]
for $i \in [n]$. Note that $R$ is a subset of all $X_i$ and $Y_i$.

How is $\langle x,y \rangle$ related to the cardinality of the union $|X_1 \cup \cdots \cup X_n \cup Y_1 \cup \cdots \cup Y_n|$? Consider an arbitrary index $j \in [\ell]$. If $x_j = y_j$ then the point sets $\{(jn + \pi_j(i), x_j) : i \in [n]\}$ and $\{(jn + \tau_j(i),y_j) : i \in [n]\}$ are equal (regardless of the concrete permutations), so they together contribute $n$ to the cardinality of the union. On the other hand, if $x_j \neq y_j$ then they are disjoint and thus contribute $2n$. Furthermore, the point set $R$ is contained in all objects and contributes $n\ell$. Hence, the cardinality of the union is equal to
\[
    n\ell + \sum_{j : x_j = y_j} n + \sum_{j : x_j \neq y_j} 2n = \frac 52 n\ell - \frac 1 2 n \cdot \left(\sum_{j : x_j = y_j} 1 + \sum_{j : x_j \neq y_j} (-1) \right) = \frac 52 n\ell - \frac 12 n\langle x, y \rangle.
\]
Suppose we get a $(1 + \eps)$-approximation $\rho$ to the cardinality of the union, then
\[
    \rho \in \left[(1 - \eps) (\tfrac{5}{2} n\ell - \tfrac{1}{2} n\langle x, y \rangle), \, (1+\eps)(\tfrac{5}{2} n\ell - \tfrac{1}{2} n\langle x, y \rangle) \right].
\]
Since $|\langle x,y \rangle|\leq \ell$, by computing $(\frac 52 n\ell - \rho)\cdot \frac 2n$ we obtain a value in $\langle x, y \rangle \pm 6\eps \ell$, that is, an additive $6 \eps \ell$ approximation to $\langle x, y \rangle$. Our choice of $\eps = \frac{1}{6 \sqrt{\ell}}$ allows to distinguish $\langle x, y \rangle \geq \sqrt{\ell}$ from $\langle x, y \rangle \leq -\sqrt{\ell}$.

The reduction is not yet complete. The catch is that an algorithm for union volume estimation can only learn about the objects via queries. It is left to \emph{us} to answer those queries. In Algorithm~\ref{alg:oracle} we describe how to answer queries on object $X_i$. Queries on object $Y_i$ can be handled similarly by replacing $\pi_j$ with $\tau_j$ and $x_j$ with $y_j$. The correctness is easy to verify, and this completes the reduction. So far, the argument works for arbitrary permutations; randomness becomes relevant when we bound the number of bit accesses.
\begin{algorithm}[ht]
\caption{Answering queries}
\label{alg:oracle}
\begin{itemize}
    \item For query $\Vol(X_i)$: Return $(n+1) \ell$.
    \item For query $\Sample(X_i)$: Draw a random block $j \in [\ell]$ and a random shift $s \in [n]$. With probability $1 - \frac{1}{n + 1}$ return $(jn + s, 0)$. Otherwise return $(jn + \pi_j (i), x_j)$.
    \item For query $\Contains((a, b), X_i)$: If $(a,b) \in R$ then return true. Else, compute $j := \lfloor a / n \rfloor$ and $s := a - jn$. If $j \not\in [\ell]$ or $\pi_j (i) \neq s$ then return false. Otherwise, if $x_i = b$ then return true; else return false.
\end{itemize}
\end{algorithm}

\subsubsection*{Bounding the number of accesses}
Now for the sake of contradiction, suppose there exists a randomized algorithm $\Alg$ that $(1 + \eps)$-approximates the cardinality of the union of any $2n$ objects in $\Z^2$, using less than $\frac{\delta}{360 \cdot 2^{9/\delta}} \cdot \frac{n}{\eps^2}$ queries. We run $\Alg$ on the objects defined earlier (using Algorithm~\ref{alg:oracle} to handle its queries). From the output, we can distinguish $\langle x,y \rangle \geq \sqrt{\ell}$ and $\langle x,y \rangle \leq -\sqrt{\ell}$.
We claim that less than $\delta\ell$ bits of $x,y$ are accessed from the side of Algorithm~\ref{alg:oracle}.

To this end, note that accesses to $x,y$ only happen in the ``otherwise'' clauses of $\Sample()$ and $\Contains()$ in Algorithm~\ref{alg:oracle}. For analysis we isolate the following mini game:

\begin{lemma}
    \label{lem:mini-game}
    Consider a game of $m := \lfloor 2^{-9/\delta} n \rfloor$ rounds. At the start, Merlin generates a secret random permutation $\pi$ of $[n]$. In each round, Arthur adaptively sends one of the following two requests to Merlin:
    \begin{itemize}
        \item $\Random()$: Answer ``yes'' with probability $\frac{1}{n+1}$ and ``no'' with the remaining probability.
        \item $\Probe(i,s)$: Answer ``yes'' if $\pi(i) = s$ and ``no'' otherwise.
    \end{itemize}
    
    No matter what strategy Arthur employs, the probability that Merlin ever answers ``yes'' in the game is at most $2\delta/5$.
\end{lemma}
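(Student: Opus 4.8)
The plan is to argue that each round reveals almost nothing about the hidden permutation $\pi$, so that each round is answered ``yes'' with probability only $O(1/n)$; since there are only $m \approx 2^{-9/\delta} n$ rounds, a union bound then caps the total ``yes'' probability by $O(2^{-9/\delta})$, and an elementary estimate shows this is at most $2\delta/5$. We may assume Arthur's strategy is deterministic, as fixing his coins to a worst value only increases the probability we bound. First I would split the event ``Merlin ever answers yes'' into the event $B_R$ that some $\Random()$ request is answered yes and the event $B_P$ that some $\Probe(\cdot,\cdot)$ request is answered yes. Each of the at most $m$ requests of the form $\Random()$ is answered yes independently with probability $\tfrac{1}{n+1}$, so a union bound gives $\Pr[B_R] \le \tfrac{m}{n+1}$.

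For $B_P$, the key claim is that if the first $j$ probes have all been answered ``no'', then the set $\Pi_j$ of permutations consistent with these answers has size at least $(n-j)\,(n-1)!$: indeed $|\Pi_0| = n! = n\,(n-1)!$, and answering ``no'' to $\Probe(i,s)$ deletes from the current set only permutations with $\pi(i) = s$, of which there are at most $(n-1)!$ in all of $S_n$. Because the $\Random()$ answers and Arthur's choices carry no information about $\pi$, conditioned on any history of the first $k-1$ rounds in which every probe was answered ``no'' --- the only case where $B$ has not yet occurred --- the permutation $\pi$ is still uniform on $\Pi_j$, where $j \le k-1$ is the number of probes made so far. Hence if round $k$ is $\Probe(i,s)$, it is answered ``yes'' with probability $\tfrac{|\{\pi \in \Pi_j : \pi(i) = s\}|}{|\Pi_j|} \le \tfrac{(n-1)!}{(n-j)(n-1)!} = \tfrac{1}{n-j} \le \tfrac{1}{n-m}$, and a union bound over the $m$ rounds gives $\Pr[B_P] \le \tfrac{m}{n-m}$.

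Combining, Merlin answers ``yes'' at some point with probability at most $\tfrac{m}{n+1} + \tfrac{m}{n-m}$. Since $\delta \le 1$ we have $m \le 2^{-9/\delta} n \le 2^{-9} n < n/2$, so $n-m > n/2$ and the bound is at most $\tfrac{4m}{n} \le 4 \cdot 2^{-9/\delta} = 2^{2-9/\delta}$. It then remains to check $2^{2-9/\delta} \le \tfrac{2\delta}{5}$ for all $\delta \in (0,1]$; taking base-$2$ logarithms, this is equivalent to $\log_2 \delta + \tfrac{9}{\delta} \ge 1 + \log_2 5$, and since the left-hand side is decreasing on $(0,1]$ with value $9$ at $\delta = 1$, the inequality holds with room to spare.

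The step I expect to be the main obstacle is the combinatorial claim in the second paragraph: ensuring that, despite adaptivity and the interleaving of $\Random()$ and $\Probe(\cdot,\cdot)$ requests, each ``no'' answer shrinks the consistent permutation set by at most an additive $(n-1)!$, so that the conditional ``yes'' probability never exceeds $\tfrac{1}{n-m}$. The clean way around the adaptivity is to observe that only probe answers depend on $\pi$, so conditioning on the rest of the history keeps $\pi$ uniform over the currently-consistent set; the additive shrinkage is then immediate, since the deleted permutations form a subset of the size-$(n-1)!$ set of permutations with one coordinate pinned.
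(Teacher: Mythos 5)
Your proof is correct, and it takes a genuinely different route from the paper for the harder half of the bound. Both arguments split the event by whether the first (or any) ``yes'' comes from a $\Random()$ or a $\Probe()$ request, and both dispose of the $\Random()$ part by the same union bound $m/(n+1)$. For the $\Probe()$ part, however, the paper uses an encoding/entropy argument: Merlin's permutation is encoded by the index $T$ of the first ``yes'' probe together with the induced ordering of the remaining $n-1$ elements, and comparing the expected encoding length to $\log(n!)$ forces $\Pr(F)\le\delta/3$. You instead argue directly that, conditioned on any transcript in which all probes so far were answered ``no'', the permutation remains uniform on the consistent set $\Pi_j$, whose size decreases by at most $(n-1)!$ per ``no''; this gives a per-round ``yes'' probability of at most $1/(n-j)$ and hence $\Pr[B_P]\le m/(n-m)$. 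Your handling of adaptivity is sound: with Arthur deterministic (WLOG), the pairs $(\pi,\text{coins})$ consistent with a transcript form a product set, so the posterior on $\pi$ really is uniform on $\Pi_j$. Your approach is more elementary and in fact quantitatively stronger --- it yields a total probability of $O(2^{-9/\delta})$ rather than $2\delta/5$, showing that the round budget $m=\lfloor 2^{-9/\delta}n\rfloor$ could be relaxed to $\Theta(\delta n)$ --- whereas the paper's compression argument is the more robust template (it would survive, e.g., probes given by arbitrary predicates each satisfied by few permutations). One phrasing nit: ``every probe answered no'' is the only case where $B_P$ has not yet occurred, not where the full event $B$ has not occurred (a $\Random()$ request may already have returned ``yes''); this does not affect your bound, since conditioning on the $\Random()$ answers leaves $\pi$ uniform on $\Pi_j$ regardless.
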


\begin{proof}
    Let $E$ be the event that Merlin ever answers ``yes'' \emph{and} the first one was due to a $\Random()$ request. Let $F$ be the event that Merlin ever answers ``yes'' \emph{and} the first one was due to a $\Probe()$ request. We will bound $\Pr(E)$ and $\Pr(F)$ separately.

    To bound $\Pr(E)$, define events $E_1, \dots, E_m$ where $E_t$ indicates that Arthur's $t$-th request is $\Random()$ and the answer is ``yes''. Clearly $\Pr(E_t) \leq \frac{1}{n+1}$, thus $\Pr(E) \leq \sum_{t=1}^m \Pr(E_t) \leq \frac{m}{n+1} < 2^{-9/\delta} < \delta/15$.

    To bound $\Pr(F)$ we use an entropy argument. In particular, we will describe and analyze an encoding scheme for a random permutation $\pi$ of $[n]$.
    
    The encoder receives $\pi$ and acts as Merlin in the game, using $\pi$ as the secret permutation. It interacts with Arthur till the end of the game. If $F$ does not happen then the encoding is simply a bit \texttt{0} followed by $\pi$. On the other hand, if $F$ happens, then consider the first round $T \in [m]$ when ``yes'' appeared; denote that request by $\Probe(i,s)$. The encoding starts with a bit \texttt{1}, then the number $T$, and finishes with the ordering $\pi'$ of $[n] \setminus \{i\}$ induced by $\pi$.

    The decoder receives the encoding and tries to reconstruct $\pi$. First it reads the leading bit of the encoding. If the bit is \texttt{0} then $\pi$ immediately follows. If the bit is \texttt{1} then the decoder recovers the number $T$. It pretends to be Merlin and starts a game with Arthur, using the same random tape as the encoder but without knowledge of $\pi$. For the first $T-1$ requests it just answers ``no'' to Arthur. Now comes the $T$-th request from Arthur, which must be a $\Probe(i,s)$ request with answer ``yes''. This allows the decoder to deduce $\pi(i) = s$. The remaining entries in $\pi$ can be fully restored from $\pi'$.

    Writing $p := \Pr(F)$, the encoding length is at most
    \begin{align*}
        & (1 - p) \Big( 1 + \lceil \log(n!) \rceil \Big) + p\Big(1 + \lceil \log(m) \rceil + \lceil \log((n-1)!) \rceil \Big) \\
        <& (1 - p) \big(\log(n!) + 3 \big) + p \big(\log(n) - 9/\delta + \log((n-1)!) + 3 \big) \\
        = & \log(n!) + 3 - 9 p /\delta.
    \end{align*}
    where we used $m := \lfloor 2^{-9/\delta} n\rfloor$ in the second line. Since the encoding length must be at least the Shannon entropy $\log(n!)$ of the encoded permutation $\pi$, we conclude that $\Pr(F) = p \leq \delta/3$.

    Putting the two bounds together, we have $\Pr(E \cup F) \leq \Pr(E) + \Pr(F) \leq 2\delta/5$.
\end{proof}

Where does the game come into play? The reader might find it useful to recall Figure~\ref{fig:polygons}. The space $\Z^2$ is split into $\ell$ blocks horizontally, where each block $j \in [\ell]$ holds information about the $j$-th bit of the input vector. Let us focus on a particular block $j$. As $\Alg$ runs, some of its queries ``hit'' this block while others don't. Precisely, we say that a $\Sample(X_i)$ query \emph{hits $j$} if the random block chosen in Algorithm~\ref{alg:oracle} is $j$. Likewise, we say that a $\Contains((a,b), X_i)$ query \emph{hits $j$} if $\lfloor a/n \rfloor = j$. Now we restrict ourselves to the queries hitting $j$ and ignore all the others. Then, over time, $\Alg$ (as Arthur) is exactly playing the game with us (as Merlin) on the secret permutation $\pi_j$: $\Sample()$ queries correspond to $\Random()$ requests; $\Contains()$ queries correspond to $\Probe()$ requests; Algorithm~\ref{alg:oracle} branches to the ``otherwise'' clauses if and only if the answer is ``yes''. Whatever information $\Alg$ collects in the queries hitting other blocks $j'$, it has no effect on the decision in this game, as all other permutations are independent of $\pi_j$, and as $x,y$ do not play a role in this game.

With this in mind, let random variable $N_j$ count the number of queries hitting $j$. Define
\[
    J := \{j \in [\ell] : x_j \text{ is accessed and } N_j \leq m \} \quad \text{and} \quad J' := \{j \in [\ell] : N_j > m \}.
\]
where $m := \lfloor 2^{-9/\delta} n \rfloor$. By Lemma \ref{lem:mini-game}, $\Pr(j \in J) \leq 2\delta/5$ for every $j \in [\ell]$, so $\E[|J|] \leq 2\delta\ell/5$. On the other hand, by definition of $J'$ and our assumption that the number of queries is at most $\frac{\delta}{360 \cdot 2^{9/\delta}} \cdot \frac{n}{\eps^2}$, we can bound
\[
    |J'|
    < \frac{1}{m+1} \cdot \frac{\delta}{360 \cdot 2^{9/\delta}} \cdot \frac{n}{\eps^2}
    \leq \frac{\delta\ell}{10}.
\]
So Algorithm~\ref{alg:oracle} accesses less than $\frac{2\delta\ell}{5} + \frac{\delta\ell}{10} = \frac{\delta\ell}{2}$ bits of $x$ in expectation. Symmetrically, the same bound holds for $y$. Altogether it accesses less than $\delta\ell$ bits of $x,y$. But as we argued, the output can be used to distinguish $\langle x, y \rangle \geq \sqrt{\ell}$ and $\langle x, y \rangle \leq -\sqrt{\ell}$. This contradicts Lemma~\ref{lem:gap-hamming}.

\section{Reduction from discrete to continuous space}
\label{sec:discrete-continuous}
Union volume estimation and Klee's measure problem were formulated in the continuous space $\R^d$. Their analogs in discrete space $\Z^d$ are only simpler due to the following reduction. Consider the natural embedding $\varphi$ that blows up each point $x = (x_1, \dots, x_d) \in \Z^d$ into a continuous cube $\varphi(x) = [x_1, x_1+1] \times \cdots \times [x_d, x_d+1] \subset \R^d$. For any finite subset $X \subset \Z^d$, denote its image by $\varphi(X) \subset \R^d$. We have:
\begin{itemize}
    \item $\Vol(\varphi(X)) = |X|$.
    \item To sample from $\varphi(X)$ uniformly, we can first sample $x \in X$ uniformly and then sample from $\varphi(x)$ uniformly.
    \item If $X$ is a discrete box then $\varphi(X)$ is a continuous box, and vice versa.
\end{itemize}

Given this correspondence, our algorithm for Klee's measure problem can be made to handle discrete boxes in $\Z^d$. On the other hand, the discrete objects in our lower bound construction can be realized as axis-aligned polygons in $\R^2$.

\bibliography{main}

\begin{thebibliography}{10}

\bibitem{Agarwal10}
Pankaj~K. Agarwal.
\newblock An improved algorithm for computing the volume of the union of cubes.
\newblock In David~G. Kirkpatrick and Joseph S.~B. Mitchell, editors, {\em Proceedings of the 26th {ACM} Symposium on Computational Geometry, Snowbird, Utah, USA, June 13-16, 2010}, pages 230--239. {ACM}, 2010.
\newblock \href {https://doi.org/10.1145/1810959.1811000} {\path{doi:10.1145/1810959.1811000}}.

\bibitem{AgarwalKS07}
Pankaj~K. Agarwal, Haim Kaplan, and Micha Sharir.
\newblock Computing the volume of the union of cubes.
\newblock In Jeff Erickson, editor, {\em Proceedings of the 23rd {ACM} Symposium on Computational Geometry, Gyeongju, South Korea, June 6-8, 2007}, pages 294--301. {ACM}, 2007.
\newblock \href {https://doi.org/10.1145/1247069.1247121} {\path{doi:10.1145/1247069.1247121}}.

\bibitem{Bringmann12}
Karl Bringmann.
\newblock An improved algorithm for {Klee}'s measure problem on fat boxes.
\newblock {\em Comput. Geom.}, 45(5-6):225--233, 2012.
\newblock \href {https://doi.org/10.1016/j.comgeo.2011.12.001} {\path{doi:10.1016/j.comgeo.2011.12.001}}.

\bibitem{BringmannF10}
Karl Bringmann and Tobias Friedrich.
\newblock Approximating the volume of unions and intersections of high-dimensional geometric objects.
\newblock {\em Comput. Geom.}, 43(6-7):601--610, 2010.
\newblock \href {https://doi.org/10.1016/j.comgeo.2010.03.004} {\path{doi:10.1016/j.comgeo.2010.03.004}}.

\bibitem{CanettiEG95}
Ran Canetti, Guy Even, and Oded Goldreich.
\newblock Lower bounds for sampling algorithms for estimating the average.
\newblock {\em Inf. Process. Lett.}, 53(1):17--25, 1995.
\newblock \href {https://doi.org/10.1016/0020-0190(94)00171-T} {\path{doi:10.1016/0020-0190(94)00171-T}}.

\bibitem{CarmeliZBCKS22}
Nofar Carmeli, Shai Zeevi, Christoph Berkholz, Alessio Conte, Benny Kimelfeld, and Nicole Schweikardt.
\newblock Answering (unions of) conjunctive queries using random access and random-order enumeration.
\newblock {\em {ACM} Trans. Database Syst.}, 47(3):9:1--9:49, 2022.
\newblock \href {https://doi.org/10.1145/3531055} {\path{doi:10.1145/3531055}}.

\bibitem{CenHLP24}
Ruoxu Cen, William He, Jason Li, and Debmalya Panigrahi.
\newblock Beyond the quadratic time barrier for network unreliability.
\newblock In David~P. Woodruff, editor, {\em Proceedings of the 2024 {ACM-SIAM} Symposium on Discrete Algorithms, {SODA} 2024, Alexandria, VA, USA, January 7-10, 2024}, pages 1542--1567. {SIAM}, 2024.
\newblock \href {https://doi.org/10.1137/1.9781611977912.62} {\path{doi:10.1137/1.9781611977912.62}}.

\bibitem{ChakrabartiR11}
Amit Chakrabarti and Oded Regev.
\newblock An optimal lower bound on the communication complexity of gap-hamming-distance.
\newblock In {\em Proceedings of the forty-third annual ACM symposium on Theory of computing}, pages 51--60, 2011.

\bibitem{ChakrabortyMV13}
Supratik Chakraborty, Kuldeep~S. Meel, and Moshe~Y. Vardi.
\newblock A scalable approximate model counter.
\newblock In Christian Schulte, editor, {\em Principles and Practice of Constraint Programming - 19th International Conference, {CP} 2013, Uppsala, Sweden, September 16-20, 2013. Proceedings}, volume 8124 of {\em Lecture Notes in Computer Science}, pages 200--216. Springer, 2013.
\newblock \href {https://doi.org/10.1007/978-3-642-40627-0_18} {\path{doi:10.1007/978-3-642-40627-0_18}}.

\bibitem{Chan99}
Timothy~M. Chan.
\newblock Geometric applications of a randomized optimization technique.
\newblock {\em Discret. Comput. Geom.}, 22(4):547--567, 1999.
\newblock \href {https://doi.org/10.1007/PL00009478} {\path{doi:10.1007/PL00009478}}.

\bibitem{Chan03}
Timothy~M. Chan.
\newblock Semi-online maintenance of geometric optima and measures.
\newblock {\em {SIAM} J. Comput.}, 32(3):700--716, 2003.
\newblock \href {https://doi.org/10.1137/S0097539702404389} {\path{doi:10.1137/S0097539702404389}}.

\bibitem{Chan10}
Timothy~M. Chan.
\newblock A (slightly) faster algorithm for {Klee}'s measure problem.
\newblock {\em Comput. Geom.}, 43(3):243--250, 2010.
\newblock \href {https://doi.org/10.1016/j.comgeo.2009.01.007} {\path{doi:10.1016/j.comgeo.2009.01.007}}.

\bibitem{Chan13}
Timothy~M. Chan.
\newblock Klee's measure problem made easy.
\newblock In {\em 54th Annual {IEEE} Symposium on Foundations of Computer Science, {FOCS} 2013, 26-29 October, 2013, Berkeley, CA, {USA}}, pages 410--419. {IEEE} Computer Society, 2013.
\newblock \href {https://doi.org/10.1109/FOCS.2013.51} {\path{doi:10.1109/FOCS.2013.51}}.

\bibitem{Chan23}
Timothy~M. Chan.
\newblock Minimum {$L_\infty$ Hausdorff} distance of point sets under translation: Generalizing klee's measure problem.
\newblock In Erin~W. Chambers and Joachim Gudmundsson, editors, {\em 39th International Symposium on Computational Geometry, SoCG 2023, June 12-15, 2023, Dallas, Texas, {USA}}, volume 258 of {\em LIPIcs}, pages 24:1--24:13. Schloss Dagstuhl - Leibniz-Zentrum f{\"{u}}r Informatik, 2023.
\newblock \href {https://doi.org/10.4230/LIPICS.SOCG.2023.24} {\path{doi:10.4230/LIPICS.SOCG.2023.24}}.

\bibitem{DalviS07}
Nilesh~N. Dalvi and Dan Suciu.
\newblock Efficient query evaluation on probabilistic databases.
\newblock {\em {VLDB} J.}, 16(4):523--544, 2007.
\newblock \href {https://doi.org/10.1007/s00778-006-0004-3} {\path{doi:10.1007/s00778-006-0004-3}}.

\bibitem{DBLP:books/lib/BergCKO08}
Mark de~Berg, Otfried Cheong, Marc~J. van Kreveld, and Mark~H. Overmars.
\newblock {\em Computational geometry: algorithms and applications, 3rd Edition}.
\newblock Springer, 2008.

\bibitem{EppsteinE94}
David Eppstein and Jeff Erickson.
\newblock Iterated nearest neighbors and finding minimal polytopes.
\newblock {\em Discret. Comput. Geom.}, 11:321--350, 1994.
\newblock \href {https://doi.org/10.1007/BF02574012} {\path{doi:10.1007/BF02574012}}.

\bibitem{Karger99}
David~R. Karger.
\newblock A randomized fully polynomial time approximation scheme for the all-terminal network reliability problem.
\newblock {\em {SIAM} J. Comput.}, 29(2):492--514, 1999.
\newblock \href {https://doi.org/10.1137/S0097539796298340} {\path{doi:10.1137/S0097539796298340}}.

\bibitem{KarpL85}
Richard~M. Karp and Michael Luby.
\newblock {Monte-Carlo} algorithms for the planar multiterminal network reliability problem.
\newblock {\em J. Complex.}, 1(1):45--64, 1985.
\newblock \href {https://doi.org/10.1016/0885-064X(85)90021-4} {\path{doi:10.1016/0885-064X(85)90021-4}}.

\bibitem{KarpLM89}
Richard~M. Karp, Michael Luby, and Neal Madras.
\newblock {Monte-Carlo} approximation algorithms for enumeration problems.
\newblock {\em J. Algorithms}, 10(3):429--448, 1989.
\newblock \href {https://doi.org/10.1016/0196-6774(89)90038-2} {\path{doi:10.1016/0196-6774(89)90038-2}}.

\bibitem{KimelfeldKS08}
Benny Kimelfeld, Yuri Kosharovsky, and Yehoshua Sagiv.
\newblock Query efficiency in probabilistic {XML} models.
\newblock In Jason~Tsong{-}Li Wang, editor, {\em Proceedings of the {ACM} {SIGMOD} International Conference on Management of Data, {SIGMOD} 2008, Vancouver, BC, Canada, June 10-12, 2008}, pages 701--714. {ACM}, 2008.
\newblock \href {https://doi.org/10.1145/1376616.1376687} {\path{doi:10.1145/1376616.1376687}}.

\bibitem{Klee77}
Victor Klee.
\newblock Can the measure of $\bigcup_1^n [a_i, b_i]$ be computed in less than {$O(n \log n)$} steps?
\newblock {\em The American Mathematical Monthly}, 84(4):284--285, 1977.
\newblock \href {https://doi.org/10.1080/00029890.1977.11994336} {\path{doi:10.1080/00029890.1977.11994336}}.

\bibitem{LubyTechReport}
Michael~G. Luby.
\newblock {Monte-Carlo} methods for estimating system reliability.
\newblock Technical report, Report UCB/CSD 84/168, Computer Science Division, University of California, Berkeley, 1983.

\bibitem{Meel0V22}
Kuldeep~S. Meel, Sourav Chakraborty, and N.~V. Vinodchandran.
\newblock Estimation of the size of union of delphic sets: Achieving independence from stream size.
\newblock In Leonid Libkin and Pablo Barcel{\'{o}}, editors, {\em {PODS} '22: International Conference on Management of Data, Philadelphia, PA, USA, June 12 - 17, 2022}, pages 41--52. {ACM}, 2022.
\newblock \href {https://doi.org/10.1145/3517804.3526222} {\path{doi:10.1145/3517804.3526222}}.

\bibitem{MeelSV19}
Kuldeep~S. Meel, Aditya~A. Shrotri, and Moshe~Y. Vardi.
\newblock Not all {FPRAS}s are equal: demystifying {FPRAS}s for {DNF}-counting.
\newblock {\em Constraints An Int. J.}, 24(3-4):211--233, 2019.
\newblock \href {https://doi.org/10.1007/s10601-018-9301-x} {\path{doi:10.1007/s10601-018-9301-x}}.

\bibitem{MeelV021}
Kuldeep~S. Meel, N.~V. Vinodchandran, and Sourav Chakraborty.
\newblock Estimating the size of union of sets in streaming models.
\newblock In Leonid Libkin, Reinhard Pichler, and Paolo Guagliardo, editors, {\em PODS'21: Proceedings of the 40th {ACM} {SIGMOD-SIGACT-SIGAI} Symposium on Principles of Database Systems, Virtual Event, China, June 20-25, 2021}, pages 126--137. {ACM}, 2021.
\newblock \href {https://doi.org/10.1145/3452021.3458333} {\path{doi:10.1145/3452021.3458333}}.

\bibitem{OvermarsY91}
Mark~H. Overmars and Chee{-}Keng Yap.
\newblock New upper bounds in {Klee}'s measure problem.
\newblock {\em {SIAM} J. Comput.}, 20(6):1034--1045, 1991.
\newblock \href {https://doi.org/10.1137/0220065} {\path{doi:10.1137/0220065}}.

\bibitem{PavanVBM21}
Aduri Pavan, N.~V. Vinodchandran, Arnab Bhattacharyya, and Kuldeep~S. Meel.
\newblock Model counting meets {$F_0$} estimation.
\newblock In Leonid Libkin, Reinhard Pichler, and Paolo Guagliardo, editors, {\em PODS'21: Proceedings of the 40th {ACM} {SIGMOD-SIGACT-SIGAI} Symposium on Principles of Database Systems, Virtual Event, China, June 20-25, 2021}, pages 299--311. {ACM}, 2021.
\newblock \href {https://doi.org/10.1145/3452021.3458311} {\path{doi:10.1145/3452021.3458311}}.

\bibitem{ReDS07}
Christopher R{\'{e}}, Nilesh~N. Dalvi, and Dan Suciu.
\newblock Efficient top-$k$ query evaluation on probabilistic data.
\newblock In Rada Chirkova, Asuman Dogac, M.~Tamer {\"{O}}zsu, and Timos~K. Sellis, editors, {\em Proceedings of the 23rd International Conference on Data Engineering, {ICDE} 2007, The Marmara Hotel, Istanbul, Turkey, April 15-20, 2007}, pages 886--895. {IEEE} Computer Society, 2007.
\newblock \href {https://doi.org/10.1109/ICDE.2007.367934} {\path{doi:10.1109/ICDE.2007.367934}}.

\bibitem{ShiB08}
Qiaosheng Shi and Binay~K. Bhattacharya.
\newblock Application of computational geometry to network $p$-center location problems.
\newblock In {\em Proceedings of the 20th Annual Canadian Conference on Computational Geometry, Montr{\'{e}}al, Canada, August 13-15, 2008}, 2008.

\bibitem{TirthapuraW12}
Srikanta Tirthapura and David~P. Woodruff.
\newblock Rectangle-efficient aggregation in spatial data streams.
\newblock In Michael Benedikt, Markus Kr{\"{o}}tzsch, and Maurizio Lenzerini, editors, {\em Proceedings of the 31st {ACM} {SIGMOD-SIGACT-SIGART} Symposium on Principles of Database Systems, {PODS} 2012, Scottsdale, AZ, USA, May 20-24, 2012}, pages 283--294. {ACM}, 2012.
\newblock \href {https://doi.org/10.1145/2213556.2213595} {\path{doi:10.1145/2213556.2213595}}.

\bibitem{LeeuwenW81}
Jan van Leeuwen and Derick Wood.
\newblock The measure problem for rectangular ranges in $d$-space.
\newblock {\em J. Algorithms}, 2(3):282--300, 1981.
\newblock \href {https://doi.org/10.1016/0196-6774(81)90027-4} {\path{doi:10.1016/0196-6774(81)90027-4}}.

\bibitem{YildizS12}
Hakan Yildiz and Subhash Suri.
\newblock On {Klee}'s measure problem for grounded boxes.
\newblock In Tamal~K. Dey and Sue Whitesides, editors, {\em Proceedings of the 28th {ACM} Symposium on Computational Geometry, Chapel Hill, NC, USA, June 17-20, 2012}, pages 111--120. {ACM}, 2012.
\newblock \href {https://doi.org/10.1145/2261250.2261267} {\path{doi:10.1145/2261250.2261267}}.

\end{thebibliography}

\end{document}